%% file: main.tex
\documentclass[journal]{IEEEtran}
\usepackage{cite}
\usepackage{amsmath,amssymb,amsfonts}
\usepackage{graphicx}
\usepackage{textcomp}
\usepackage{xcolor}
\usepackage{multirow}
\usepackage{makecell}
\usepackage{algorithm,algcompatible}
\usepackage{hhline}
\usepackage{multicol}
\usepackage{amssymb}
\usepackage{amsmath}
\usepackage{amsthm}
\usepackage{siunitx}
\usepackage{calc}
\usepackage{array}
\usepackage{optidef}
\usepackage{epstopdf}
\usepackage{amsmath}
\usepackage{mathtools}
\usepackage{mathrsfs}
\usepackage{bm}
\usepackage{cite}
\usepackage{boldline}
\usepackage{algorithm}
\usepackage{booktabs}
\usepackage{siunitx}
\usepackage{subcaption}

\input{macros}

\def\BibTeX{{\rm B\kern-.05em{\sc i\kern-.025em b}\kern-.08em
    T\kern-.1667em\lower.7ex\hbox{E}\kern-.125emX}}

\newtheorem{theorem}{Theorem}

\newtheorem{corollary}{Corollary}
    
\begin{document}
\title{Scalable Rate-Splitting Precoding via Recurrent Structure-Preserving Graph Neural Networks}

\author{Wonseok~Choi,~\IEEEmembership{Student Member,~IEEE}, Jeongjae~Lee,~\IEEEmembership{Student Member,~IEEE}, and Songnam~Hong,~\IEEEmembership{Senior Member,~IEEE}
\thanks{W. Choi, J. Lee, and S. Hong are with the Department of Electronic Engineering, Hanyang University, Seoul 04763, Korea (e-mail: \{ryan4975, lyjcje7466, snhong\}@hanyang.ac.kr).}
\vspace{-0.1cm}
}

\maketitle

\begin{abstract}
Graph neural network~(GNN)-based precoding has demonstrated strong potential for scalable multi-user beamforming in multi-user multiple-input single-output~(MU-MISO) systems under space division multiple access~(SDMA). However, direct extension to rate-splitting multiple access~(RSMA) is non-trivial due to the coupled common/private-stream structure inherent to RSMA, which requires a fundamentally different graph representation and permutation equivariance structure. Motivated by this, we propose a recurrent structure-preserving graph neural network~(RS-GNN) for scalable RSMA precoding. RS-GNN constructs precoder-dependent graph features at every refinement layer, enabling closed-loop interference-aware message passing, and recovers the common and private precoders through an analytically grounded structure-based reconstruction via a differentiable linear solver. This design decouples the learnable parameters from fixed system dimensions, enabling generalization to unseen system sizes without retraining. We formally prove that RS-GNN satisfies mixed permutation equivariance with respect to both user and antenna orderings, and show that RS-GNN reduces to conventional SDMA precoding as a special case by deactivating the common-stream branch. Simulation results demonstrate that RS-GNN achieves near-WMMSE sum-rate performance with significantly lower online inference time, while generalizing robustly to unseen system sizes; its SDMA special case consistently outperforms existing GNN-based precoders across unseen antenna and user configurations, SNR regimes, and channel distributions.
\end{abstract}

\begin{IEEEkeywords} 
Rate-splitting multiple access, MU-MISO precoding, graph neural networks, structure-preserving learning, precoding optimization, size generalization.
\end{IEEEkeywords}

\section{Introduction}
\label{sec:introduction}

Rate-splitting multiple access~(RSMA) has emerged as a powerful downlink transmission strategy that provides flexible interference management beyond conventional space division multiple access~(SDMA) and non-orthogonal multiple access~(NOMA). In RSMA, each user message is split into common and private parts: the common parts of all users are jointly encoded into a single common-stream decoded by all users via successive interference cancellation~(SIC), after which each user decodes its own private-stream. By allowing each receiver to partially decode interference and partially treat it as noise, RSMA provides an additional degree of freedom for managing multi-user interference, bridging and outperforming SDMA and NOMA across a wide range of channel conditions~\cite{clerckx2023primer,mao2022rate,mao2018rate}. Despite these advantages, RSMA precoding leads to a challenging non-convex optimization problem due to the joint design of a common precoder, private precoders, and a common-rate allocation vector that are mutually coupled through a min-rate bottleneck constraint. Existing works address this problem through iterative optimization algorithms which solve a sequence of tractable subproblems to obtain a locally optimal or stationary solution~\cite{mao2018rate,li2020rate,mishra2021rate,mao2021rate,park2022rate}. However, these methods must be re-executed from scratch at every channel realization and require many iterations to converge, resulting in high online computational complexity that limits their deployment in dynamic wireless environments.

This motivates learning-based precoding methods that approximate the optimal channel state information~(CSI)-to-precoder mapping offline and deploy it online with a single forward pass, amortizing the per-channel cost of iterative optimization~\cite{sun2018learning}.
However, the RSMA precoding mapping is highly nonlinear and high-dimensional. The outputs are strongly coupled through multi-user interference and a min-rate bottleneck constraint, requiring the network to learn over a large hypothesis space~\cite{sun2022improving}. 
Purely data-driven architectures therefore suffer from poor sample efficiency and limited generalization to unseen system sizes or channel distributions~\cite{zhao2022learning,sun2022learning}. These limitations highlight the need for structured learning frameworks that go beyond black-box function approximation by embedding the analytical properties of the RSMA precoding problem into the neural network architecture, while retaining scalability to varying system configurations.

\subsection{Related Work and Limitations}

Optimizing the RSMA precoder is more challenging than its SDMA counterpart, as it requires the joint design of a common precoder shared by all users and private precoders subject to a common-rate bottleneck constraint~\cite{clerckx2023primer}. The classical iterative WMMSE algorithm~\cite{mao2018rate} achieves near-optimal performance but incurs high per-channel computational complexity, as each iteration requires solving a convex subproblem via optimization toolboxes such as CVX.
A more efficient alternative is the fractional programming with hyperplane fixed-point iteration~(FP-HFPI) algorithm~\cite{fang2024rate}, which first derives the optimal beamforming structure of the RSMA precoder via the Karush--Kuhn--Tucker (KKT) conditions and then iterates only over the low-dimensional Lagrangian dual variables to recover the full precoder analytically. By avoiding optimization toolboxes and exploiting the optimal beamforming structure, FP-HFPI achieves comparable sum-rate performance to WMMSE with significantly lower computational complexity. Despite this advantage, both WMMSE and FP-HFPI must be executed from scratch at every channel realization, incurring a per-channel online cost that grows with the number of iterations. This computational burden motivates learning-based alternatives.

RS-BNN~\cite{wang2024rs} is a representative learning-based approach that unfolds the FP-HFPI algorithm by replacing its inner optimization loop with a lightweight deep neural network~(DNN) to predict the Lagrangian dual variables, thereby amortizing the per-channel optimization cost.  Unlike SDMA, where WMMSE admits a relatively straightforward unfolding structure, the min-rate bottleneck constraint in RSMA couples the common and private precoders, making direct WMMSE unfolding substantially more complex; FP-HFPI is therefore preferred as the unfolding target due to its toolbox-free structure and analytically derived optimal beamforming solution. However, RS-BNN relies on a fixed-dimensional architecture whose input/output dimensions are tied to the training system configuration, limiting its generalization to unseen numbers of antennas or users. Moreover, the learned updates remain confined to the FP-HFPI alternating optimization trajectory, restricting the solution space to the neighborhood of the solver's stationary point.

To overcome the large hypothesis space and limited size scalability of fixed-dimensional neural precoders, graph neural networks~(GNNs) have emerged as a structured learning architecture for wireless precoding~\cite{zhao2024understanding}. Multi-user interference naturally admits a graph representation in which users correspond to nodes and interference relationships correspond to edges. By sharing message passing functions across nodes and employing permutation-invariant aggregation, GNN-based precoders inherently preserve user permutation equivariance and handle different numbers of users with the same trainable parameters. Building on these properties, several GNN-based precoding methods have been developed for conventional SDMA systems~\cite{shen2022graph, guo2023model,zhang2025gradient, sun2025homogeneous, chowdhury2023deep}.

However, directly extending these SDMA-oriented GNN precoders to RSMA presents two fundamental architectural challenges. Existing SDMA-GNN designs produce only user-indexed private precoder outputs and provide no mechanism to generate the user-invariant common precoder required by RSMA; the coupled common/private output structure demands a dedicated architectural design that standard GNN combination functions do not enforce.
Moreover, most existing GNN precoders construct graph features solely from the channel realization or from algorithmic variables inherited from an unfolded solver, and do not re-evaluate the interference pattern induced by the currently reconstructed common and private precoders at each refinement step. Consequently, simply replacing the SDMA objective with the RSMA sum-rate does not yield a compatible architecture.

Several recent studies have investigated GNN-based learning for RSMA systems~\cite{chen2024energy,kumar2025graph,huang2025gnn, huang2026max,hou2025sum,kumar2025self}, demonstrating the potential of graph-based learning for RSMA resource allocation and precoder optimization. However, existing RSMA-GNN approaches are mainly designed for specific system configurations and rely on neural readout modules to generate optimization variables or precoders from channel-dependent graph embeddings. Consequently, the interference pattern induced by the evolving precoders is never fed back into message passing, and joint antenna- and user-size-generalizable RSMA precoding with a single trained model is not addressed.

\subsection{Main Contributions}

In this paper, we propose a recurrent structure-preserving graph neural network~(RS-GNN) for scalable RSMA precoding. The main contributions are summarized as follows.

\begin{itemize}

\item \textit{Closed-loop interference-aware message passing.}
Unlike existing GNN-based precoders that fix graph features based on the channel realization alone, RS-GNN reconstructs precoder-dependent graph features at every refinement layer from the current precoding state, allowing message passing to explicitly track the desired and interference beam patterns induced by the evolving common and private precoders. This enables closed-loop refinement that is unavailable in standard feed-forward GNN designs.

\item \textit{Structure-preserving precoder recovery.}
Instead of relying on generic node-wise neural updates to produce high-dimensional precoding vectors, RS-GNN predicts only compact user-wise structural variables and recovers the full RSMA precoder through an analytically grounded reconstruction based on the KKT-optimal beamforming structure. This structural constraint substantially narrows the hypothesis space and decouples the learnable neural modules from the antenna dimension, leading to stable end-to-end training via the sum-rate objective without requiring labeled outputs from numerical solvers.

\item \textit{Size-generalizable inference with formal equivariance guarantee.}
The combination of shared interference-aware message passing and structure-based reconstruction allows a single trained RS-GNN model to operate across varying numbers of users and antennas without modification or retraining. We formally prove that RS-GNN satisfies mixed permutation equivariance with respect to both user and antenna orderings, ensuring that the learned precoding policy is invariant to arbitrary system indexing. Extensive simulations validate that this design yields consistent near-WMMSE performance across unseen system sizes with significantly reduced online inference time.

\item \textit{Reduction to SDMA as a special case.} By deactivating the common-stream branch, RS-GNN reduces to an SDMA precoder without any architectural modification. Simulations in SDMA mode confirm that the proposed structure-preserving design consistently outperforms existing GNN-based SDMA precoders across varying system sizes, SNR regimes, and channel distributions.
\end{itemize}

\subsection{Outline}
The remainder of this paper is organized as follows. Section~\ref{sec:pre} describes the MU-MISO RSMA system model and formulates the sum-rate maximization problem. Section~\ref{sec:proposed} presents the proposed RS-GNN architecture and establishes its theoretical properties, including mixed permutation equivariance and reduction to SDMA as a special case. Section~\ref{sec:TrainInfer} describes the training and inference procedures of RS-GNN. Section~\ref{sec:experiments} presents simulation results, including ablation studies, comparison with RSMA benchmark methods, and comparison with GNN-based baselines under the SDMA special case. Section~\ref{sec:conclusion} concludes the paper.

\vspace{0.1cm}
{\em Notations.}
We use $\mathbf a$ and $\mathbf A$ to denote a column vector and a matrix, respectively. For a vector $\mathbf a$, $\operatorname{diag}(\mathbf a)$ denotes a diagonal matrix whose diagonal entries are the elements of $\mathbf a$. The operators $(\cdot)^{\rm T}$, $(\cdot)^{\rm H}$, $|\cdot|$, $\|\cdot\|_F$, $\mathbb E[\cdot]$, $\mathcal O(\cdot)$, $\Re(\cdot)$, and $\Im(\cdot)$ represent transpose, Hermitian transpose, absolute value of a scalar, the Frobenius norm, expectation, big-$\Oc$ complexity order, real part, and imaginary part, respectively. The sets $\mathbb C^{m\times n}$ and
$\mathbb R^{m\times n}$ denote the complex- and real-valued spaces of dimension $m\times n$, respectively. Especially, $\mathbb R_+^{m\times n}$ denotes the non-negative real-valued spaces. The $m$-dimensional all-one vector (resp. all-zero vector) and
$m\times m$ identity matrix are denoted by $\mathbf 1_m$ (resp. $\mathbf 0_m$) and $\mathbf I_m$, respectively. We use $\mathcal{CN}(\boldsymbol\mu,\mathbf\Sigma)$ to denote the circularly symmetric complex Gaussian distribution with mean $\boldsymbol\mu$ and covariance $\mathbf\Sigma$.

\section{Preliminaries}\label{sec:pre}

In this section, we present the system model for downlink multi-user multiple-input single-output~(MU-MISO) precoding with RSMA and formulate the corresponding sum-rate maximization problem. We then review the standard GNN-based precoding framework and identify three fundamental limitations that prevent its direct extension to size-generalizable RSMA precoding.

\subsection{System Model and Problem Formulation}

We consider a downlink MU-MISO system in which a base station (BS) equipped with $N$ transmit antennas serves $K$ single-antenna users indexed by $k\in \mathcal{K}\triangleq \{1,2,\dots,K\}$. The downlink channel vector from the BS to user $k$ is denoted by $\mathbf h_k\in \mathbb C^{N}$, and the composite channel matrix is defined as 
\begin{equation}
    \mathbf{H} = [\mathbf{h}_1,\mathbf{h}_2, \dots, \mathbf{h}_K ] \in \mathbb{C}^{N \times K}.
\end{equation}
The BS employs one-layer RSMA by transmitting a common-stream~$s_{\rm c}\in\CC$ and $K$ private-streams $\sv_{\rm p}=[s_{{\rm p},1},s_{{\rm p},2},\dots,s_{{\rm p},K}]^{\rm T}\in\CC^{K}$. Letting $\sv = \big[s_{\rm c}, \sv_{\rm p}^{\rm T}\big]^{\rm T}\in\CC^{K+1}$, all streams are assumed to be mutually independent with zero mean and unit variance, i.e., $\EE\big[\sv\sv^{\rm H}\big]=\Id_{K+1}$. Let $\mathbf w_{\rm c}\in\mathbb C^{N}$ and $\Wm_{\rm p} = [\mathbf w_{{\rm p},1}, \wv_{{\rm p},2}, \dots,\wv_{{\rm p},K}]\in\mathbb C^{N\times K}$ denote the precoders for the common-stream and private-streams, respectively, and define the aggregate precoding matrix $\mathbf W=[\mathbf w_{\rm c}, \Wm_{\rm p}]\in \mathbb C^{N\times (K+1)}$. The transmit signal is $\xv = \Wm\sv\in\CC^{N}$, and the received signal at user~$k$ is
\begin{equation}
y_k=\mathbf h_k^{\rm H}\mathbf x+z_k,
\end{equation}
where $z_k\sim\mathcal {CN}\big(0,\sigma_k^2\big)$ is additive Gaussian noise with variance $\sigma_k^2>0$.


Each user first decodes the common-stream while treating all private-streams as interference, performs SIC to remove the decoded common-stream from the received signal, and then decodes its own private-stream. The signal-to-interference-plus-noise ratios~(SINRs) at user $k$ for decoding the common and private-streams are defined as
\begin{align}
\mathrm{SINR}_{{\rm c}, k}(\mathbf W) &= \frac{\big|{\mathbf h}_k^{\rm H} {\mathbf w}_{\rm c}\big|^2}{\sum_{i\in\mathcal{K}} \big|{\mathbf h}_k^{\rm H} {\mathbf w}_{{\rm p},i}\big|^2 + \sigma_k^2},\\
\mathrm{SINR}_{{\rm p},k}(\mathbf W) &= \frac{\big|{\mathbf h}_k^{\rm H} {\mathbf w}_{{\rm p},k}\big|^2}{\sum_{i\in\mathcal{K}\setminus \{k\}} \big|{\mathbf h}_k^{\rm H} {\mathbf w}_{{\rm p},i}\big|^2 + \sigma_k^2},
\end{align} respectively. The corresponding achievable rates are
\begin{align}
R_{{\rm c},k}(\mathbf W)&=\log_2\big(1+\mathrm{SINR}_{{\rm c},k}(\mathbf W)\big),\\ R_{{\rm p},k}(\mathbf W)&=\log_2\big(1+\mathrm{SINR}_{{\rm p},k}(\mathbf W)\big).
\end{align}
Since the common-stream must be decodable by every user, its achievable rate is limited by the weakest common-stream decoder \cite{fang2024rate}: $R_{\mathrm c}(\mathbf{W})=\min_{k\in\mathcal{K}}R_{{\mathrm c},k}(\mathbf W)$. The RSMA sum-rate therefore reduces to
\begin{equation}
  R_{\rm sum}(\mathbf{W})
  = R_{\rm c}(\mathbf{W})+\sum_{k\in\mathcal{K}}R_{{\rm p},k}(\mathbf{W}),\label{eq:rsum}
\end{equation}
and the sum-rate maximization problem is formulated as
\begin{subequations}
\begin{align}
  \mathcal{P}:
  \max_{\mathbf{W}}\;&
  R_{\rm sum}(\mathbf{W})\label{eq:mainprob}
  \\
  \text{s.t.}\;&
  \|\mathbf{W}\|_F^2 \leq P_{\rm TX},\label{eq:mainprob1}
\end{align}
\end{subequations} where $P_{\rm TX}$ denotes the total transmit power budget. For a given channel matrix $\mathbf{H}$, we define the optimal precoding policy for the problem $\Pc$ as a mapping $f^{\star}:\mathbb{C}^{N\times K}\to \mathbb{C}^{N\times(K+1)}$ such that
\begin{equation}
  \mathbf{W}^{\star}=f^{\star}(\mathbf{H}),
\end{equation}
where $\mathbf{W}^{\star}\in\CC^{N\times (K+1)}$ denotes the optimal precoders.

\begin{figure}[t]
\centering
\includegraphics[width=1\linewidth]{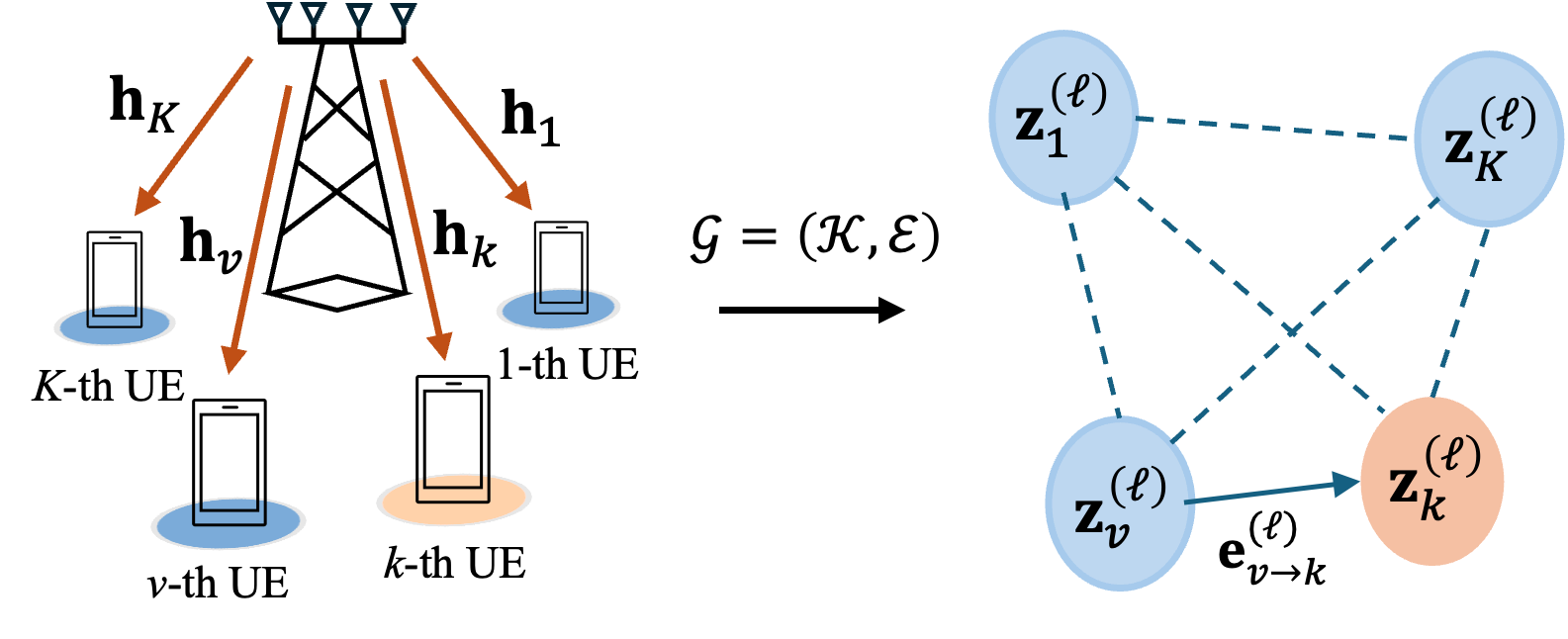}
\caption{Graph representation of the MU-MISO downlink precoding problem, where each node represents a user and each directed edge represents an inter-user interference link.} 
\label{fig:graph}
\end{figure}

\subsection{Standard GNN-Based Precoding and Its Limitations}\label{subsec:standardGNN}
The problem~$\mathcal{P}$ is non-convex and, in general, does not admit a closed-form solution. To reduce the real-time computational burden of solving $\mathcal{P}$, therefore, learning-based methods can approximate the optimal policy $f^{\star}(\cdot)$ with a parameterized function. Among these, GNNs have emerged as a natural architecture for multi-user precoding. By sharing learnable functions across graph elements and aggregating messages through permutation-invariant operations, GNNs inherently preserve user permutation equivariance and can process graphs of varying sizes. Accordingly, the learned GNN-based precoding policy is written as
\begin{equation}\label{eq:gnn_policy}
\hat{\mathbf{W}}=\hat{f}_{\boldsymbol{\theta}}(\mathbf{H}), 
\end{equation} where $\hat{\Wm}\in\CC^{N\times(K+1)}$ denotes the learned precoding matrix, $\hat{f}_{\boldsymbol{\theta}}(\cdot)$ denotes the approximated and parameterized precoding policy, and ${\thetav}$ denotes the learnable parameters.

For GNN-based precoding, the system is modeled as a fully connected directed graph $\mathcal{G}=(\mathcal{K},\mathcal{E})$, as illustrated in Fig.~\ref{fig:graph}, where the node set $\mathcal{K}$ represents the $K$ users and the directed edge set $\mathcal{E}$ captures the inter-user interference relationships. A standard GNN updates the hidden representation of each node through three operations: i) processing, ii) pooling, and iii) combination \cite{shen2022graph}.
{Let $\mathbf{z}_k^{(\ell)}$ denote the hidden state of a node $k$ at the $\ell$-th GNN layer of total $L$ layers, where $\ell\in\{0,1,\dots,L-1\}$}. For a target node $k$, the message from a neighboring node  $v\in\mathcal{K}\setminus\{k\}$ at a layer $\ell$ is generated by a processing function:
\begin{equation}
\mathbf{m}_{v\to k}^{(\ell)} = \phi_{\rm proc}^{(\ell)}\Big(\mathbf{z}_k^{(\ell)},\mathbf{z}_v^{(\ell)},\mathbf{e}_{v\to k}^{(\ell)}\Big),
\end{equation} where $\mathbf{e}_{v\to k}^{(\ell)}$ is the edge feature of the directed edge $(v,k)\in\Ec$. The incoming messages are aggregated by a permutation-invariant pooling function:
\begin{equation}
\mathbf{a}_k^{(\ell)} =  \operatorname{POOL}\Big(\mathbf{m}_{v\to k}^{(\ell)}\Big),
\end{equation}
where $\operatorname{POOL}(\cdot)$ can be implemented by summation, mean, maximum, or attention-weighted summation of all incoming messages to the node $k$, i.e., $\mathbf{m}_{v\to k}^{(\ell)},\forall v\in\mathcal{K}\setminus\{k\}$. The pooling function is permutation-invariant with respect to the ordering of neighboring nodes, ensuring that the aggregated message is independent of the user indexing. Finally, the hidden state of node $k$ is updated by a combination function:
\begin{equation}
\label{eq:combination}
\mathbf{z}_k^{(\ell+1)} = \phi_{\rm comb}^{(\ell)}\Big(\mathbf{z}_k^{(\ell)}, \mathbf{a}_k^{(\ell)}\Big).
\end{equation}
Note that $\phi_{\rm proc}^{(\ell)}(\cdot)$ and $\phi_{\rm comb}^{(\ell)}(\cdot)$ are learnable and shared across all nodes.

While parameter sharing and permutation-invariant pooling ensure user permutation equivariance, we identify three fundamental limitations that prevent standard GNNs from achieving robust size-generalizable RSMA precoding. When the hidden states or output precoders are parameterized with a fixed antenna dimension, the learned processing and combination functions become implicitly tied to the training antenna size, limiting generalization to unseen antenna configurations without retraining. RSMA also requires a mixed output structure absent in conventional SDMA precoding: the private precoders must be permutation equivariant with respect to user ordering, whereas the common precoder must remain invariant to user reordering, a distinction that a standard combination function in~\eqref{eq:combination} does not explicitly enforce. 
Moreover, most existing GNN-based precoders construct graph features solely from the channel realization $\mathbf{H}$.
For example, a representative edge feature can be written as $e_{v\to k}^{(\ell)}=\big|\hv_{k}^{\rm H}\hv_{v}\big|$ which is fixed across all message passing layers~\cite{shen2022graph}. Consequently, the interference pattern induced by the currently refined precoder is never re-evaluated during message passing, resulting in an open-loop design.

This motivates a GNN architecture that simultaneously addresses all three limitations, as detailed in the following sections.

\section{Proposed RS-GNN for RSMA Precoding}
\label{sec:proposed}

In this section, we propose the recurrent structure-preserving GNN architecture, referred to as {\bf RS-GNN}, which is built on two design principles that jointly address the limitations identified in Section~\ref{subsec:standardGNN}. Rather than generating the full precoding matrix as a high-dimensional neural output, RS-GNN predicts only low-dimensional user-wise structural variables through shared message passing and recovers the common and private precoders analytically via structure-preserving linear systems, decoupling the learnable components from fixed system dimensions. The recovered precoders are then fed back into the graph at every refinement layer to reconstruct precoder-dependent graph features, enabling closed-loop beam pattern-aware message passing that is absent in standard feed-forward GNN designs. 

\begin{figure*} [t]
    \centering
    \includegraphics[width=1\linewidth]{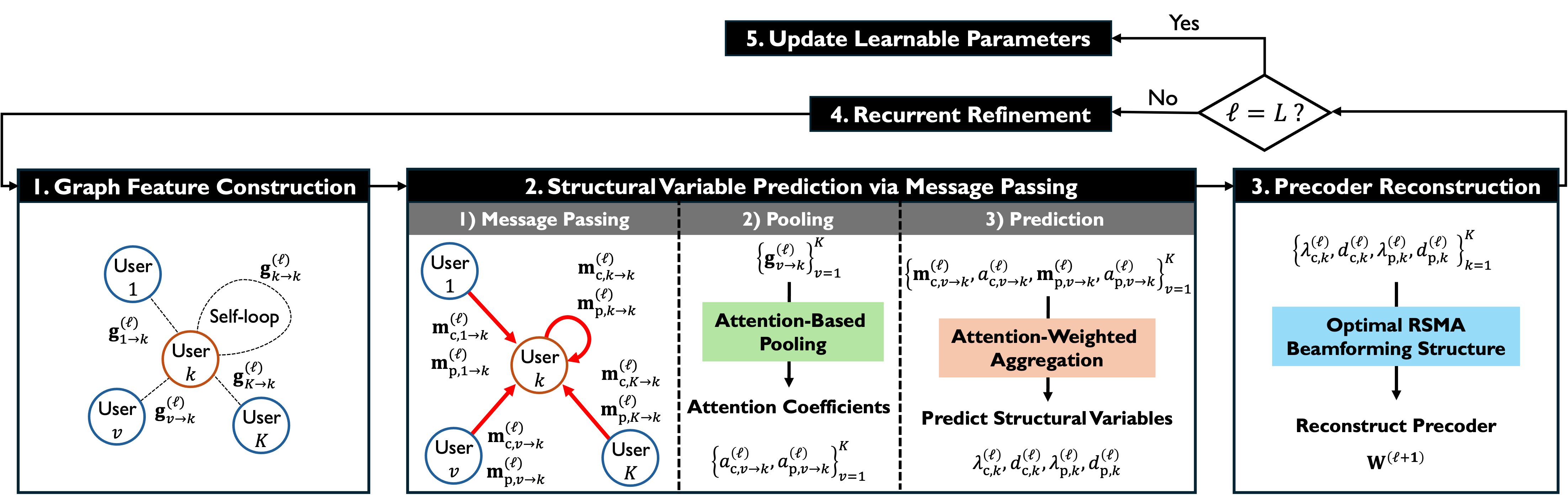}
    \caption{The architecture of the proposed RS-GNN.}
    \label{fig:framework}
\end{figure*}
\subsection{Design Principle of RS-GNN}

We begin by presenting the optimal RSMA beamforming structure that serves as the analytical foundation of RS-GNN, and then describe how RS-GNN exploits this structure to enable closed-loop interference-aware refinement for scalable RSMA precoding.

\subsubsection{Optimal RSMA Beamforming Structure}

Following the Karush--Kuhn--Tucker~(KKT)-based optimal beamforming structure for RSMA in~\cite{fang2024rate}, the common and private precoders at any non-zero stationary point of problem \(\mathcal P\) can be represented as follows:
\begin{align}
\mathbf{w}_{\rm c} &= \mathbf{H}
\big( \mathbf{I}_{K} + \operatorname{diag}(\lambdav_{\rm c})\mathbf{H}^{\rm H}\mathbf{H}
\big)^{-1} \mathbf{d}_{\rm c},\nonumber\\
\mathbf{W}_{\rm p} &=
\mathbf{H} \big( \mathbf{I}_{K}
+ \operatorname{diag}(\lambdav_{\rm p}) \mathbf{H}^{\rm H}\mathbf{H}
\big)^{-1} \operatorname{diag} ( \mathbf{d}_{\rm p}),\label{eq:rsma_structure}
\end{align} 
where $\boldsymbol{\lambda}_{\rm c}=
[\lambda_{{\rm c},1},\lambda_{{\rm c},2},\dots,\lambda_{{\rm c},K}]^{\rm T}
\in\mathbb{R}_{+}^{K}$ and
$\boldsymbol{\lambda}_{\rm p}=
[\lambda_{{\rm p},1},\lambda_{{\rm p},2},\dots,\lambda_{{\rm p},K}]^{\rm T}
\in\mathbb{R}_{+}^{K}$ are non-negative structural
weight vectors, $\mathbf{d}_{\rm c}=
[d_{{\rm c},1},d_{{\rm c},2},\dots,d_{{\rm c},K}]^{\rm T}
\in\mathbb{C}^{K}$ is the channel combination
coefficient vector for the common precoder, and
$\mathbf{d}_{\rm p}=
[d_{{\rm p},1},d_{{\rm p},2},\dots,d_{{\rm p},K}]^{\rm T}
\in\mathbb{R}_{+}^{K}$ is that for the private
precoders. The components of $\mathbf{d}_{\rm p}$ can be restricted to be non-negative without loss of generality, since each $d_{{\rm p},k}$ only scales the \(k\)-th private beam and its phase can be absorbed into the corresponding data symbol. In contrast, $\mathbf{d}_{\rm c}$ is kept complex-valued, as its components jointly determine the channel combination direction of the common precoder.

\subsubsection{Closed-Loop Structural Refinement}

For a given composite channel matrix~$\mathbf{H}$, the RSMA precoding matrix~$\mathbf{W}=[\mathbf{w}_{\rm c}, \mathbf{W}_{\rm p}] \in\mathbb{C}^{N\times(K+1)}$, comprising $2N(K+1)$ real-valued parameters, is fully
determined by $5K$ real-valued user-wise structural variables
$\{\lambda_{{\rm c},k}, \lambda_{{\rm p},k},
{\rm Re}(d_{{\rm c},k}), {\rm Im}(d_{{\rm c},k}),
d_{{\rm p},k}\}_{k=1}^{K}$, reducing the neural network output dimension by a factor of approximately $\frac{2N(K+1)}{5K}\approx\frac{2N}{5}$. A neural network that predicts only these $O(K)$ structural variables is naturally decoupled from the antenna dimension~$N$, and the full precoder is recovered through the optimal structure in~\eqref{eq:rsma_structure}, whose matrix dimensions automatically adapt to the input channel, which is key to size-generalizable precoding.

Building on this structure, RS-GNN couples the message passing dynamics directly with 
the evolving precoding solution. At each of the $L$ refinement layers~$\ell\in\{0,1,\dots,L-1\}$, the current RSMA precoding state $\mathbf{W}^{(\ell)}=\Big[\mathbf{w}_{\rm c}^{(\ell)}, \mathbf{W}_{\rm p}^{(\ell)}\Big]$ is used to reconstruct precoder-dependent graph features, so that the message passing module explicitly observes the pattern of the desired and interference beams induced by the current precoding state. The predicted structural variables are then used to recover the next precoder through the structure-preserving reconstruction based on \eqref{eq:rsma_structure}, closing the feedback loop between successive refinement layers:
\begin{align}
\mathbf W^{(\ell)}
&\stackrel{(a)}{\Rightarrow}
\mathbf g_{v\to k}^{(\ell)}\nonumber\\
&\stackrel{(b)}{\Rightarrow}
\Big\{{
\lambda_{{\rm c},k}^{(\ell)},{\rm Re}\Big(d_{{\rm c},k}^{(\ell)}\Big),{\rm Im}\Big(d_{{\rm c},k}^{(\ell)}\Big),
\lambda_{{\rm p},k}^{(\ell)},
d_{{\rm p},k}^{(\ell)}
}
\Big\}_{k=1}^{K}\nonumber\\
&\stackrel{(c)}{\Rightarrow}
\mathbf W^{(\ell+1)},\label{eq:refinement_loop}
\end{align} where (a), (b), and (c) denote precoder-dependent graph feature construction from the current precoding state $\mathbf{W}^{(\ell)}$ beyond the fixed channel matrix~$\mathbf{H}$, structural variable prediction via message passing that captures the current desired and interference beam patterns, and structure-based precoder reconstruction from the predicted low-dimensional structural variables, respectively. This design decouples the structural variable prediction from fixed system dimensions, enabling size-generalizable inference, which is the key distinction of RS-GNN from existing GNN-based precoding methods.

\subsection{RS-GNN Architecture}
\label{subsec:RS-GNN}

The proposed RS-GNN architecture operates as a recurrent refinement process, as summarized in~\eqref{eq:refinement_loop} and illustrated in Fig.~\ref{fig:framework}. At each layer~$\ell$, the current RSMA precoding state $\mathbf{W}^{(\ell)}$ is updated through three sequential steps: precoder-dependent graph feature construction, structural variable prediction via message passing, and structure-based precoder reconstruction, corresponding to steps~(a),~(b), and~(c) in \eqref{eq:refinement_loop}, respectively.  

\subsubsection{Precoder Initialization}
The precoding state is initialized at layer $\ell=0$ as follows:
\begin{equation}
\mathbf W^{(0)}
= \sqrt{\frac{P_{\rm TX}}{\big\|\tilde{\mathbf W}^{(0)}\big\|_F^2}}\tilde{\mathbf W}^{(0)},\label{eq:init}
\end{equation} where 
$\tilde{\mathbf W}^{(0)} = \big[\sum_{k\in\mathcal K}\mathbf h_k,\mathbf H\big]$. The common precoder is initialized along the aggregate channel direction and the private precoders are initialized by matched-filter transmission. This initialization is required for the permutation equivariance analysis in Section~\ref{subsec:PEA}.

\subsubsection{Precoder-Dependent Graph Feature Construction}
Given the precoding state $\Wm^{(\ell)}$ at layer $\ell$, the edge feature for each directed edge $(v,k)$, $\forall v\in \mathcal{K}\setminus\{k\}$, is defined as
\begin{align}
\mathbf{g}_{v\to k}^{(\ell)}=
\Big[
\big|\mathbf{h}_v^{\rm H}\mathbf{h}_k\big|,
\big|\mathbf{h}_v^{\rm H}\mathbf{w}_{\rm c}^{(\ell)}\big|,&\big|\mathbf{h}_k^{\rm H}\mathbf{w}_{\rm c}^{(\ell)}\big|,\nonumber\\
&\big|\mathbf{h}_v^{\rm H}\mathbf{w}_{{\rm p},k}^{(\ell)}\big|,
\big|\mathbf{h}_k^{\rm H}\mathbf{w}_{{\rm p},v}^{(\ell)}\big|
\Big]^{\rm T}. 
\label{eq:rsma_edgefeature}
\end{align}
The first component measures the channel correlation between neighboring node~$v$ and target node~$k$. The second and third components quantify how the current common precoder is observed by user~$v$ and user~$k$. The fourth and fifth components capture the leakage of private beams toward each other. 

To construct the node graph feature of user~$k$ at layer $\ell$ using shared networks, we augment the directed graph by introducing a self-loop $(k,k)$ for each user node $k$. The self-loop feature is defined by applying the same feature template in \eqref{eq:rsma_edgefeature} with $v=k$, yielding $\mathbf{g}_{k\to k}^{(\ell)}$. Unlike the directed edges, the self-loop does not represent an inter-user interference link; rather, it characterizes the effective common beam gain and the desired private beam gain at user $k$ under the current precoders. The repeated entries in~\eqref{eq:rsma_edgefeature}, i.e., the second and third components and the fourth and fifth components, ensure that the self-loop features can be processed by the shared networks as the inter-user edge features. Since all graph features depend on the current common and private precoders, they are reconstructed at every refinement layer, making the graph representation precoder-dependent rather than fixed by the channel realization alone.

\subsubsection{Structural Variable Prediction
via Message Passing}

Given the graph feature vectors $\mathbf{g}_{v\to k}^{(\ell)},\forall v\in\Kc$ (including the node graph feature), RS-GNN employs two shared feed-forward networks for the common and private branches, respectively. The common branch network $\phi_{\rm comm}^{\rm RS\text {-}GNN}:\mathbb{R}^{5}\to\mathbb{R}^{3}$ generates three messages for estimating the structural variables of user $k$ associated with the common precoder $\wv_{\rm c}^{(\ell)}$:
\begin{align}
    \mathbf m_{{\rm c},v\to k}^{(\ell)}
    &=
    \phi_{\rm comm}^{\rm RS\text {-}GNN}\Big(\mathbf g_{v\to k}^{(\ell)}\Big) \nonumber\\
    &=
    \Big[
    m_{\lambda_{\rm c},v\to k}^{(\ell)},
    m_{\Re(d_{\rm c}),v\to k}^{(\ell)},
    m_{\Im(d_{\rm c}),v\to k}^{(\ell)}
    \Big]^{\rm T},
\end{align}
where $m_{\lambda_{\rm c},v\to k}^{(\ell)}$ is the message for the common-stream structural weight. Since $d_{{\rm c},k}^{(\ell)}$ is complex-valued, its message is represented by two real-valued components $m_{\Re(d_{\rm c}),v\to k}^{(\ell)}$ and $m_{\Im(d_{\rm c}),v\to k}^{(\ell)}$, corresponding to the real and imaginary parts, respectively. The private branch network $\phi_{\rm priv}^{\rm RS\text {-}GNN}:\mathbb{R}^{5}\to\mathbb{R}^{2}$ generates two messages for the private precoders $\Wm_{\rm p}^{(\ell)}$:
\begin{align}
    \mathbf m_{{\rm p},v\to k}^{(\ell)}
    &=
    \phi_{\rm priv}^{\rm RS\text {-}GNN}\Big(\mathbf g_{v\to k}^{(\ell)}\Big)\nonumber\\
    &=
    \Big[
    m_{\lambda_{\rm p},v\to k}^{(\ell)},
    m_{d_{\rm p},v\to k}^{(\ell)}
    \Big]^{\rm T},
\end{align}
where $m_{\lambda_{\rm p},v\to k}^{(\ell)}$ and $m_{d_{\rm p},v\to k}^{(\ell)}$ are the messages for the private-stream structural weight and channel combination coefficient, respectively.

To weight the incoming messages according to their relevance to the common and private structural variables, we adopt branch-specific attention-based pooling. The common and private branch attention scores are computed as
\begin{align}
{\sigma}_{{\rm c},v\to k}^{(\ell)}
&=
\operatorname{LeakyReLU}
\Big(
\psi_{\rm comm}^{\rm RS\text {-}GNN}\Big(\mathbf g_{v\to k}^{(\ell)}\Big)
\Big),\nonumber\\
\sigma_{{\rm p},v\to k}^{(\ell)}
&=
\operatorname{LeakyReLU}
\Big(
\psi_{\rm priv}^{\rm RS\text {-}GNN}\Big(\mathbf g_{v\to k}^{(\ell)}\Big)\Big),
\end{align}
where $\psi_{\rm comm}^{\rm RS\text {-}GNN}:\mathbb{R}^{5}\to\mathbb{R}$ and $\psi_{\rm priv}^{\rm RS\text {-}GNN}:\mathbb{R}^{5}\to\mathbb{R}$ are shared linear scoring functions. From the attention scores, the attention coefficients are obtained as
\begin{align}
{a}_{{\rm c},v\to k}^{(\ell)}
&=\frac{
\exp\Big(\sigma_{{\rm c},v\to k}^{(\ell)}\Big)
}{
\sum_{i\in\mathcal{K}}
\exp\Big(\sigma_{{\rm c},i\to k}^{(\ell)}\Big)
},\nonumber\\
a_{{\rm p},v\to k}^{(\ell)}
&=\frac{
\exp\Big(\sigma_{{\rm p},v\to k}^{(\ell)}\Big)
}{
\sum_{i\in \mathcal{K}}
\exp\Big(\sigma_{{\rm p},i\to k}^{(\ell)}\Big)
}.
\end{align} The user-wise structural variables are then predicted by attention-weighted aggregation:
\begin{align}
\lambda_{{\rm c},k}^{(\ell)}
&=
\operatorname{softplus}
\Bigg(
\sum_{v\in\mathcal{K}}
a_{{\rm c},v\to k}^{(\ell)}
m_{\lambda_{\rm c},v\to k}^{(\ell)}
\Bigg),\nonumber\\
d_{{\rm c},k}^{(\ell)}
&=
\sum_{v\in\mathcal{K}}
a_{{\rm c},v\to k}^{(\ell)}\Big(m_{\Re(d_{\rm c}),v\to k}^{(\ell)}+{j}  m_{\Im(d_{\rm c}),v\to k}^{(\ell)}\Big)
,\nonumber\\
\lambda_{{\rm p},k}^{(\ell)}
&=
\operatorname{softplus}
\Bigg(
\sum_{v\in\mathcal{K}}
a_{{\rm p},v\to k}^{(\ell)}
m_{\lambda_{\rm p},v\to k}^{(\ell)}
\Bigg),\nonumber\\
d_{{\rm p},k}^{(\ell)}
&=
\operatorname{softplus}
\Bigg(
\sum_{v\in\mathcal{K}}
a_{{\rm p},v\to k}^{(\ell)}
m_{d_{\rm p},v\to k}^{(\ell)}
\Bigg).
\end{align} The softplus activation enforces the non-negativity of $\lambda_{{\rm c},k}^{(\ell)}$, $\lambda_{{\rm p},k}^{(\ell)}$, and $d_{{\rm p},k}^{(\ell)}$, as required by the optimal beamforming structure in~\eqref{eq:rsma_structure}.

\subsubsection{Precoder Reconstruction}

After message passing, the predicted structural variables are used to update the RSMA precoder through the optimal structure in~\eqref{eq:rsma_structure}, replacing the generic combination function~\eqref{eq:combination}. Specifically, let
\begin{align}
\mathbf{C}_{\rm c}^{(\ell)}&=\mathbf{I}_K+\operatorname{diag}\Big(\lambdav_{\rm c}^{(\ell)}\Big)
\mathbf{H}^{\rm H}\mathbf{H},\nonumber\\
\mathbf{C}_{\rm p}^{(\ell)}&=\mathbf{I}_K+\operatorname{diag}\Big(\lambdav_{\rm p}^{(\ell)}\Big)\mathbf{H}^{\rm H}\mathbf{H}.
\label{eq:C}
\end{align} Rather than inverting the $N\times N$ antenna-dimensional matrices directly, the reconstruction is equivalently performed by solving two $K\times K$ linear systems:
\begin{align}
\mathbf{C}_{\rm c}^{(\ell)}\mathbf{x}_{\rm c}^{(\ell)}&=\mathbf{d}_{\rm c}^{(\ell)},\nonumber\\
\mathbf{C}_{\rm p}^{(\ell)}\mathbf{X}_{\rm p}^{(\ell)} &=\operatorname{diag}\Big(\mathbf{d}_{\rm p}^{(\ell)}\Big),
\label{eq:linearsolve}
\end{align}
where $\mathbf{x}_{\rm c}^{(\ell)}\in\mathbb{C}^{K}$ and $\mathbf{X}_{\rm p}^{(\ell)}\in\mathbb{C}^{K\times K}$ are the common- and private-stream coefficient vector and matrix, respectively. Although the Gram matrix $\mathbf{H}^{\mathrm H}\mathbf H$ can be rank-deficient when $K>N$ or ill-conditioned under highly correlated user channels, the coefficient matrices $\mathbf C_c^{(\ell)}$ and $\mathbf C_p^{(\ell)}$ in~\eqref{eq:C} remain non-singular due to the identity term and the non-negative structural weights. Hence, the linear systems in~\eqref{eq:linearsolve} each admit a unique solution even in overloaded or highly correlated channel regimes. In practice, the systems are solved using an LU decomposition-based linear solver. Letting $\hat{\mathbf{x}}_{\rm c}^{(\ell)}\in\CC^{K}$ and $\hat{\mathbf{X}}_{\rm p}^{(\ell)}\in\CC^{K\times K}$ denote the estimated coefficient vector and matrix from the linear systems, the unnormalized RSMA precoder is then reconstructed as
\begin{equation}
\tilde{\mathbf{W}}^{(\ell+1)}=\Big[\tilde{\mathbf{w}}_{\rm c}^{(\ell+1)},\tilde{\mathbf{W}}_{\rm p}^{(\ell+1)}\Big]=\mathbf{H}\Big[\hat{\mathbf{x}}_{\rm c}^{(\ell)},\hat{\mathbf{X}}_{\rm p}^{(\ell)}\Big],
\label{eq:reconstruction}
\end{equation}
and the transmit power constraint is enforced by normalizing:
\begin{equation}
\mathbf{W}^{(\ell+1)}=\sqrt{\frac{P_{\rm TX}}{\big\|\tilde{\mathbf{W}}^{(\ell+1)}\big\|_F^2}}\,\tilde{\mathbf{W}}^{(\ell+1)}.
\label{eq:powernormalization}
\end{equation}

Since the graph feature construction, message passing, the linear systems, and power normalization are all differentiable, the entire recurrent refinement process supports end-to-end training via backpropagation of the sum-rate loss through all refinement layers, as described in Section~\ref{sec:TrainInfer}.

\subsection{Permutation Equivariance Analysis}
\label{subsec:PEA}

A desirable property of any size-generalizable precoding architecture is that its output should be independent of the arbitrary indexing of users and antennas. For standard GNNs, user permutation equivariance follows directly from parameter sharing and permutation-invariant pooling. However, verifying this property for RS-GNN is non-trivial because the architecture incorporates additional structure-preserving operations (precoder-dependent graph feature construction, linear systems, and power normalization), each of which must be shown to respect the permutation structure. Moreover, RSMA imposes a {\em mixed permutation equivariance} that differs from conventional SDMA precoding: the common precoder must be invariant to user reordering, whereas the private precoders must be equivariant, as stated in the following theorem.

\begin{theorem}[Mixed Permutation Equivariance of RS-GNN]\label{thm:mixedPE}
Let $\boldsymbol{\Pi}_{\rm A}\in\{0,1\}^{N\times N}$ and $\boldsymbol{\Pi}_{\rm U}\in\{0,1\}^{K\times K}$ denote arbitrary antenna and user permutation matrices, and define the permuted matrix of a composite channel matrix $\Hm$ as
\begin{equation}
  \bar{\mathbf{H}}=\boldsymbol{\Pi}_{\rm A}\mathbf{H}\boldsymbol{\Pi}_{\rm U}.
\end{equation}
Let $\mathbf{W}^{(\ell)}=\Big[\mathbf{w}_{\rm c}^{(\ell)},\mathbf{W}_{\rm p}^{(\ell)}\Big]$ and $\bar{\mathbf{W}}^{(\ell)}=\Big[\bar{\mathbf{w}}_{\rm c}^{(\ell)}, \bar{\mathbf{W}}_{\rm p}^{(\ell)}\Big]$ denote the layer-$\ell$ outputs of RS-GNN for inputs $\mathbf{H}$ and $\bar{\mathbf{H}}$, respectively. Then, the proposed RS-GNN satisfies
\begin{align}
  \bar{\mathbf{w}}_{\rm c}^{(\ell)}&=\boldsymbol{\Pi}_{\rm A}\mathbf{w}_{\rm c}^{(\ell)},\\
  \bar{\mathbf{W}}_{\rm p}^{(\ell)}&=\boldsymbol{\Pi}_{\rm A}\mathbf{W}_{\rm p}^{(\ell)}\boldsymbol{\Pi}_{\rm U},\forall\ell.
\end{align} 
\end{theorem}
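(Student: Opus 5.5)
We argue by induction on the layer index $\ell$. Let $\pi$ be the permutation of $\mathcal{K}$ encoded by $\boldsymbol{\Pi}_{\rm U}$, so that the $k$-th column of $\bar{\mathbf H}$ is $\bar{\mathbf h}_k=\boldsymbol{\Pi}_{\rm A}\mathbf h_{\pi(k)}$. The claim at layer $\ell$ then reads $\bar{\mathbf w}_{\rm c}^{(\ell)}=\boldsymbol{\Pi}_{\rm A}\mathbf w_{\rm c}^{(\ell)}$ and $\bar{\mathbf w}_{{\rm p},k}^{(\ell)}=\boldsymbol{\Pi}_{\rm A}\mathbf w_{{\rm p},\pi(k)}^{(\ell)}$. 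Throughout we use two facts: $\boldsymbol{\Pi}_{\rm A}^{\rm T}\boldsymbol{\Pi}_{\rm A}=\mathbf I_N$, and permutation matrices are real and unitary, so they preserve the Frobenius norm.

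\textbf{Base case.} For $\ell=0$, the initialization \eqref{eq:init} gives $\sum_k\bar{\mathbf h}_k=\boldsymbol{\Pi}_{\rm A}\sum_k\mathbf h_k$, since summing over users is invariant to reordering. It also gives $\bar{\mathbf H}=\boldsymbol{\Pi}_{\rm A}\mathbf H\boldsymbol{\Pi}_{\rm U}$. Hence $\|\tilde{\bar{\mathbf W}}^{(0)}\|_F=\|\tilde{\mathbf W}^{(0)}\|_F$, the normalization scalar is unchanged, and the claim holds at $\ell=0$.

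\textbf{Inductive step.} Assume the claim holds at layer $\ell$. We pass through steps (a), (b) and (c) in order.

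For (a), each entry of \eqref{eq:rsma_edgefeature} has the form $|\mathbf a^{\rm H}\mathbf b|$ with both vectors multiplied by $\boldsymbol{\Pi}_{\rm A}$, so the $\boldsymbol{\Pi}_{\rm A}$ factors cancel. This yields $\bar{\mathbf g}_{v\to k}^{(\ell)}=\mathbf g_{\pi(v)\to\pi(k)}^{(\ell)}$ for all $v,k$, including self-loops.

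For (b), the networks $\phi$ and $\psi$ are shared across edges, so the messages and attention scores are relabeled in the same way. The softmax denominators and the aggregations are sums over all $v\in\mathcal K$, which are invariant under reindexing. Therefore $\bar\lambda_{{\rm c},k}=\lambda_{{\rm c},\pi(k)}$, and the same relabeling holds for $\bar d_{{\rm c},k}$, $\bar\lambda_{{\rm p},k}$ and $\bar d_{{\rm p},k}$. In matrix form, $\bar{\boldsymbol\lambda}=\boldsymbol{\Pi}_{\rm U}^{\rm T}\boldsymbol\lambda$ and $\bar{\mathbf d}=\boldsymbol{\Pi}_{\rm U}^{\rm T}\mathbf d$.

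Step (c) is the main obstacle, since it must be handled at the level of matrices rather than entries. We would use two identities:
\begin{equation*}
\operatorname{diag}(\boldsymbol{\Pi}_{\rm U}^{\rm T}\boldsymbol\lambda)=\boldsymbol{\Pi}_{\rm U}^{\rm T}\operatorname{diag}(\boldsymbol\lambda)\boldsymbol{\Pi}_{\rm U},\qquad \bar{\mathbf H}^{\rm H}\bar{\mathbf H}=\boldsymbol{\Pi}_{\rm U}^{\rm T}\mathbf H^{\rm H}\mathbf H\boldsymbol{\Pi}_{\rm U}.
\end{equation*}
Together they give $\bar{\mathbf C}=\boldsymbol{\Pi}_{\rm U}^{\rm T}\mathbf C\boldsymbol{\Pi}_{\rm U}$. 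Since $\mathbf C$ is nonsingular, the solutions of \eqref{eq:linearsolve} are unique. So it suffices to check candidate solutions:
\begin{itemize}
\item $\bar{\mathbf x}_{\rm c}=\boldsymbol{\Pi}_{\rm U}^{\rm T}\mathbf x_{\rm c}$ satisfies $\bar{\mathbf C}_{\rm c}\bar{\mathbf x}_{\rm c}=\boldsymbol{\Pi}_{\rm U}^{\rm T}\mathbf d_{\rm c}=\bar{\mathbf d}_{\rm c}$;
\item $\bar{\mathbf X}_{\rm p}=\boldsymbol{\Pi}_{\rm U}^{\rm T}\mathbf X_{\rm p}\boldsymbol{\Pi}_{\rm U}$ satisfies $\bar{\mathbf C}_{\rm p}\bar{\mathbf X}_{\rm p}=\boldsymbol{\Pi}_{\rm U}^{\rm T}\operatorname{diag}(\mathbf d_{\rm p})\boldsymbol{\Pi}_{\rm U}=\operatorname{diag}(\bar{\mathbf d}_{\rm p})$.
\end{itemize}

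By \eqref{eq:reconstruction}, this gives
\begin{equation*}
\bar{\mathbf H}\bar{\mathbf x}_{\rm c}=\boldsymbol{\Pi}_{\rm A}\mathbf H\mathbf x_{\rm c},\qquad \bar{\mathbf H}\bar{\mathbf X}_{\rm p}=\boldsymbol{\Pi}_{\rm A}\mathbf H\mathbf X_{\rm p}\boldsymbol{\Pi}_{\rm U}.
\end{equation*}
The common beam is therefore invariant to user reordering (up to $\boldsymbol{\Pi}_{\rm A}$), while the private beams are equivariant. Finally, the Frobenius norm in \eqref{eq:powernormalization} is unchanged by $\boldsymbol{\Pi}_{\rm A}$ and $\boldsymbol{\Pi}_{\rm U}$, so the normalization scalar is identical for $\mathbf H$ and $\bar{\mathbf H}$. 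This completes the induction.

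Care is needed to keep the $\pi$ versus $\pi^{-1}$ convention consistent between $\boldsymbol{\Pi}_{\rm U}$ and $\boldsymbol{\Pi}_{\rm U}^{\rm T}$. We fix it once by the column relation $\bar{\mathbf h}_k=\boldsymbol{\Pi}_{\rm A}\mathbf h_{\pi(k)}$ and derive every other relation from it.
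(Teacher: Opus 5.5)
Your proof is correct and follows essentially the same route as the paper's appendix proof. Both argue by induction on $\ell$, with the base case from $\bar{\mathbf H}\mathbf 1_K=\boldsymbol{\Pi}_{\rm A}\mathbf H\mathbf 1_K$, then relabeling of the graph features and structural variables, the conjugation $\bar{\mathbf C}=\boldsymbol{\Pi}_{\rm U}^{\rm T}\mathbf C\boldsymbol{\Pi}_{\rm U}$, and Frobenius-norm invariance of the normalization; your entrywise $\pi$-bookkeeping in steps (a)--(b) and your appeal to nonsingularity of $\mathbf C$ for uniqueness of the solutions only make explicit what the paper leaves implicit.
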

\begin{IEEEproof}
The proof is provided in Appendix~\ref{sec:apxA}.
\end{IEEEproof}
\vspace{0.1cm}

Although RS-GNN is designed for RSMA precoding, it can be readily reduced to conventional SDMA precoding by deactivating the common-stream branch ($\mathbf{w}_{\rm c}
=\mathbf{0}_N$), retaining the same message passing module, $K\times K$ linear systems, and power normalization without architectural modification.

\begin{corollary}[Permutation Equivariance of RS-GNN in SDMA] \label{cor:sdma}
When the common-stream is deactivated, RS-GNN reduces to an SDMA precoder that preserves the user and antenna permutation equivariance:
\begin{equation}
\bar{\mathbf W}_{\rm p}^{(\ell)}
=
\boldsymbol\Pi_{\rm A}
\mathbf W_{\rm p}^{(\ell)}
\boldsymbol\Pi_{\rm U},\forall \ell .
\end{equation}
\end{corollary}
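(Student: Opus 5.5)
The plan is to derive the corollary from Theorem~\ref{thm:mixedPE} by viewing the SDMA mode as RS-GNN with the common branch switched off. The key point is that the induction in Theorem~\ref{thm:mixedPE} treats the private branch in a way that depends on the common precoder only through the invariance relation $\bar{\mathbf w}_{\rm c}^{(\ell)}=\boldsymbol\Pi_{\rm A}\mathbf w_{\rm c}^{(\ell)}$. Setting $\mathbf w_{\rm c}^{(\ell)}=\mathbf 0_N$ for both inputs satisfies this relation trivially, since $\boldsymbol\Pi_{\rm A}\mathbf 0_N=\mathbf 0_N$. I will therefore rerun the induction on $\ell$ for the private precoders alone, keeping $\mathbf w_{\rm c}\equiv\mathbf 0_N$ throughout.

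The first step is the base case. With the common stream deactivated, $\tilde{\mathbf W}_{\rm p}^{(0)}=\mathbf H$, so $\bar{\tilde{\mathbf W}}_{\rm p}^{(0)}=\boldsymbol\Pi_{\rm A}\mathbf H\boldsymbol\Pi_{\rm U}$. The Frobenius norm is invariant under left and right multiplication by permutation matrices, so the normalization \eqref{eq:init} gives $\bar{\mathbf W}_{\rm p}^{(0)}=\boldsymbol\Pi_{\rm A}\mathbf W_{\rm p}^{(0)}\boldsymbol\Pi_{\rm U}$.

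Next comes the inductive step, assuming the relation holds at layer $\ell$. Let $\pi$ be the user permutation with $\bar{\mathbf h}_{k}=\boldsymbol\Pi_{\rm A}\mathbf h_{\pi(k)}$ and $\bar{\mathbf w}_{{\rm p},k}=\boldsymbol\Pi_{\rm A}\mathbf w_{{\rm p},\pi(k)}$.

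First, I show the edge features permute. In \eqref{eq:rsma_edgefeature}, every entry is $|\mathbf a^{\rm H}\mathbf b|$ with $\mathbf a,\mathbf b$ both premultiplied by $\boldsymbol\Pi_{\rm A}$. Since $\boldsymbol\Pi_{\rm A}^{\rm T}\boldsymbol\Pi_{\rm A}=\mathbf I_N$, the antenna permutation cancels. This gives $\bar{\mathbf g}_{v\to k}^{(\ell)}=\mathbf g_{\pi(v)\to\pi(k)}^{(\ell)}$, and the common-precoder entries are zero in both cases.

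Second, I show the predicted variables permute. The networks $\phi_{\rm priv}$ and $\psi_{\rm priv}$ are shared across edges, and the softmax and the sum over $v\in\mathcal K$ are invariant to reindexing the neighbors. Hence $\bar\lambda_{{\rm p},k}=\lambda_{{\rm p},\pi(k)}$ and $\bar d_{{\rm p},k}=d_{{\rm p},\pi(k)}$. In matrix form, $\operatorname{diag}(\bar{\boldsymbol\lambda}_{\rm p})=\boldsymbol\Pi_{\rm U}^{\rm T}\operatorname{diag}(\boldsymbol\lambda_{\rm p})\boldsymbol\Pi_{\rm U}$, and likewise for $\mathbf d_{\rm p}$.

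Third, I show the linear solve and reconstruction permute. Because $\bar{\mathbf H}^{\rm H}\bar{\mathbf H}=\boldsymbol\Pi_{\rm U}^{\rm T}\mathbf H^{\rm H}\mathbf H\boldsymbol\Pi_{\rm U}$, we get $\bar{\mathbf C}_{\rm p}=\boldsymbol\Pi_{\rm U}^{\rm T}\mathbf C_{\rm p}\boldsymbol\Pi_{\rm U}$. The systems \eqref{eq:linearsolve} have unique solutions, so $\bar{\mathbf X}_{\rm p}=\boldsymbol\Pi_{\rm U}^{\rm T}\mathbf X_{\rm p}\boldsymbol\Pi_{\rm U}$. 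Substituting into \eqref{eq:reconstruction} gives $\bar{\mathbf H}\bar{\mathbf X}_{\rm p}=\boldsymbol\Pi_{\rm A}\mathbf H\mathbf X_{\rm p}\boldsymbol\Pi_{\rm U}$. Power normalization \eqref{eq:powernormalization} then preserves the relation by the same norm invariance as in the base case, which completes the induction.

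The only delicate point is confirming that deactivating the common branch does not break the feature template or the power normalization. I will check that the zero entries in $\mathbf g_{v\to k}^{(\ell)}$ are identical for both inputs, and that $\|\mathbf W\|_F=\|\mathbf W_{\rm p}\|_F$ when $\mathbf w_{\rm c}=\mathbf 0_N$. With those checks in place, the claim is exactly the private-precoder half of Theorem~\ref{thm:mixedPE} specialized to $\mathbf w_{\rm c}\equiv\mathbf 0_N$.
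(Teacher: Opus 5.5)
Your proposal is correct and takes essentially the same route as the paper, whose proof simply specializes Theorem~\ref{thm:mixedPE} to $\mathbf w_{\rm c}=\mathbf 0_N$. Rerunning the induction explicitly, as you do, makes visible what the paper's one-line argument leaves implicit. The private-branch steps use the common precoder only through the relation $\bar{\mathbf w}_{\rm c}^{(\ell)}=\boldsymbol\Pi_{\rm A}\mathbf w_{\rm c}^{(\ell)}$, which the zero vector satisfies trivially.
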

\begin{IEEEproof}
This follows directly from Theorem~\ref{thm:mixedPE} by setting $\mathbf{w}_{\rm c}=\mathbf{0}_N$.
\end{IEEEproof}

The converse, however, cannot be directly applied without architectural modification for existing GNN-based SDMA precoders~\cite{shen2022graph, guo2023model,zhang2025gradient, sun2025homogeneous, chowdhury2023deep}, which are designed to output only user-indexed private precoders. As established in Theorem~\ref{thm:mixedPE}, RSMA precoding requires a mixed permutation structure in which the common precoder must be user-invariant while the private precoders must be user-equivariant. Consequently, simply replacing the SDMA objective with the RSMA sum-rate is insufficient without an explicit mechanism for generating the user-invariant common precoder. The SDMA-reduced version of RS-GNN, referred to as {\bf RS-GNN-SDMA}, is evaluated against existing GNN-based SDMA precoders in Section~\ref{subsec:SDMA}, demonstrating that the proposed structure-preserving design generalizes beyond RSMA and remains competitive in the conventional SDMA setting without any architectural modification.

\section{Training and Inference of RS-GNN}
\label{sec:TrainInfer}

In this section, we describe how RS-GNN is trained and deployed for inference. The training procedure is unsupervised and relies solely on the RSMA sum-rate objective, while the inference procedure supports variable-depth execution and generalizes to unseen system sizes without retraining.

\subsection{Unsupervised Training with Cumulative Sum-Rate Loss}
\label{subsec:training}

Let {$\boldsymbol{\theta}^{\rm RS\text {-}GNN}\in\mathbb R^{N_{\theta}}$} denote all learnable parameters of RS-GNN, obtained by vectorizing the parameters of the common and private branch networks $\phi_{\rm comm}^{\rm RS\text {-}GNN}(\cdot)$ and $\phi_{\rm priv}^{\rm RS\text {-}GNN}(\cdot)$, and the attention 
modules $\psi_{\rm comm}^{\rm RS\text {-}GNN}(\cdot)$ and $\psi_{\rm priv}^{\rm RS\text {-}GNN}(\cdot)$. The whole network is trained in an unsupervised manner by directly maximizing the RSMA sum-rate in \eqref{eq:rsum} over a mini-batch of channel realizations~$\mathcal{B}$, without requiring labeled outputs from any numerical solver.

To make every intermediate recurrent output an effective RSMA precoder, we employ a negative cumulative sum-rate loss over all refinement steps jointly:
\begin{equation}
  \mathcal{L}\Big(\boldsymbol{\theta}^{\rm RS\text {-}GNN}\Big)
  =-\frac{1}{|\mathcal{B}|}
  \sum_{\mathbf{H}\in\mathcal{B}}
  \sum_{\ell=1}^{L_{\rm train}}
  \omega_{\ell}\,R_{\rm sum}\Big(\mathbf{W}^{(\ell)} ; \mathbf H\Big),
  \label{eq:loss}
\end{equation}
where $L_{\rm train}$ denotes the training refinement depth and $\omega_\ell\geq 0$ is the corresponding loss weight. Non-decreasing weights $\omega_1\leq\omega_2\leq \cdots\leq\omega_{L_{\rm train}}$ are adopted so that later refinement outputs are emphasized while intermediate outputs are still explicitly supervised throughout training. For notational clarity, we write $R_{\rm sum}(\mathbf{W}^{(\ell)};\mathbf{H}) \triangleq R_{\rm sum}(\mathbf{W}^{(\ell)})$, where $R_{\rm sum}(\cdot)$ is defined in~\eqref{eq:rsum} and the semicolon explicitly indicates the dependence on the channel realization~$\mathbf{H}$. Since the loss in~\eqref{eq:loss} depends only on the achievable RSMA rates, the structural variables are learned implicitly through sum-rate maximization, without any additional supervision. 

As the graph feature construction, message passing, $K\times K$ linear systems, and power normalization are all differentiable, the entire recurrent refinement process supports end-to-end training via {gradient-based stochastic optimization for minimizing the sum-rate loss in \eqref{eq:loss}} through all 
refinement layers. Consequently, RS-GNN learns to predict the structural variables directly from the sum-rate objective while preserving the analytical beamforming structure throughout the forward pass. This training procedure is summarized in Algorithm~\ref{alg:train_rsgnn}.

\subsection{Size-Generalizable Inference}
\label{subsec:SGinference}

After offline training, the trained parameter vector denoted as $\tilde{\boldsymbol{\theta}}^{\rm RS\text {-}GNN}$ is fixed and used for online precoder generation. For a given channel realization, RS-GNN  starts from the initialized precoding state $\mathbf W^{(0)}$ and executes the recurrent refinement procedure for a prescribed inference depth 
$L_{\rm inf}$. Specifically, at each $\ell=0,\ldots, L_{\rm inf}-1$, the model constructs precoder-dependent graph features from the current precoding state $\mathbf W^{(\ell)}$, predicts the structural variables using the trained shared networks, and reconstructs the next precoder $\mathbf W^{(\ell+1)}$ through the linear systems in \eqref{eq:C} and \eqref{eq:linearsolve}. The final output $\mathbf W^{(L_{\rm inf})}$ is used as the RSMA precoder denoted as $\hat{\Wm}$. 
Since the cumulative loss in~\eqref{eq:loss} supervises every intermediate output, $L_{\rm inf}$ can be set to any value satisfying $L_{\rm inf}\leq L_{\rm train}$ without retraining, making it a controllable knob for the online computational budget: a smaller $L_{\rm inf}$ reduces inference time, whereas a larger $L_{\rm inf}$ yields a more refined precoding solution.

RS-GNN can further be directly applied to channel matrices of unseen system size $(N',K')$ without retraining or architectural modification. This generalization capability stems from two structural properties of the architecture. The learnable processing networks and attention module are shared across all user nodes and edges, operating on fixed-dimensional edge features regardless of the system size. The structure-based reconstruction in~\eqref{eq:C} and~\eqref{eq:linearsolve} automatically adapts its 
matrix dimensions to the input channel, since the $K\times K$ linear systems scale with the number of users rather than the number of antennas. Specifically, for a new composite channel size $\mathbf{H}\in\mathbb{C}^{N'\times K'}$, RS-GNN predicts structural variables $\boldsymbol{\lambda}_{\rm c},\boldsymbol{\lambda}_{\rm p}, \mathbf{d}_{\rm p}\in\mathbb{R}^{K'}$ and $\mathbf{d}_{\rm c}\in\mathbb{C}^{K'}$, and reconstructs a precoding matrix $\mathbf{W}^{(L_{\rm inf})}\in \mathbb{C}^{N'\times(K'+1)}$ using the same trained parameters~$\tilde{\boldsymbol{\theta}}^{\rm RS\text {-}GNN}$. The trained model is therefore not tied to the training configuration $(N,K)$ and generalizes to arbitrary unseen system sizes $(N',K')$, while preserving the mixed permutation equivariance established in Theorem~\ref{thm:mixedPE}. The inference procedure is summarized in Algorithm~\ref{alg:infer_rsgnn}.

\begin{algorithm}[t]
\caption{Training Procedure of RS-GNN}
\label{alg:train_rsgnn}
\begin{algorithmic}[1]
\STATE \textbf{Input:} Mini-batch of channel realizations $\mathcal B$, power budget $P_{\rm TX}$, training refinement depth $L_{\rm train}$, loss weights $\{\omega_\ell\}_{\ell=1}^{L_{\rm train}}$

\STATE \textbf{Initialization:}  $\boldsymbol{\theta}^{\rm RS\text{-}GNN}$, {$\mathbf W^{(0)}$ (defined in~\eqref{eq:init})}
\FOR{$\ell=0,\ldots, L_{\rm train}-1$}
    \STATE Construct precoder-dependent graph features from \eqref{eq:rsma_edgefeature}
    \STATE Predict $\boldsymbol{\lambda}_{\rm c}^{(\ell)}$, $\boldsymbol{\lambda}_{\rm p}^{(\ell)}$, $\mathbf d_{\rm c}^{(\ell)}$, and $\mathbf d_{\rm p}^{(\ell)}$ via message passing
    \STATE Form $\mathbf C_{\rm c}^{(\ell)}$ and $\mathbf C_{\rm p}^{(\ell)}$ using \eqref{eq:C}
    \STATE Obtain $\hat{\mathbf x}_{\rm c}^{(\ell)}$ and $\hat{\mathbf X}_{\rm p}^{(\ell)}$ from the linear systems in \eqref{eq:linearsolve}
    \STATE Reconstruct $\tilde{\mathbf W}^{(\ell+1)}$ using \eqref{eq:reconstruction}
    \STATE Normalize $\tilde{\mathbf W}^{(\ell+1)}$ using \eqref{eq:powernormalization} to obtain $\mathbf W^{(\ell+1)}$
\ENDFOR
\STATE Compute the cumulative loss \(\mathcal L(\boldsymbol{\theta}^{\rm RS\text{-}GNN})\) in \eqref{eq:loss}
\STATE Update $\boldsymbol{\theta}^{\rm RS\text{-}GNN}$ via gradient-based stochastic optimization
\STATE \textbf{Output:} Trained parameters $\tilde{\boldsymbol{\theta}}^{\rm RS\text{-}GNN}$
\end{algorithmic}
\end{algorithm}

\begin{algorithm}[t]
\caption{Inference Procedure of RS-GNN}
\label{alg:infer_rsgnn}
\begin{algorithmic}[1]
\STATE \textbf{Input:} Composite channel matrix $\mathbf H\in\mathbb C^{N'\times K'}$, power budget $P_{\rm TX}$, trained parameters $\tilde{\boldsymbol{\theta}}^{\rm RS\text{-}GNN}$, inference depth $L_{\rm inf}$
\STATE \textbf{Initialization:} $\mathbf W^{(0)}$ (defined in~\eqref{eq:init})

\FOR{$\ell=0,\ldots,L_{\rm inf}-1$}
    \STATE Construct precoder-dependent graph features from \eqref{eq:rsma_edgefeature}
    \STATE Predict $\boldsymbol{\lambda}_{\rm c}^{(\ell)}$, $\boldsymbol{\lambda}_{\rm p}^{(\ell)}$, $\mathbf d_{\rm c}^{(\ell)}$, and $\mathbf d_{\rm p}^{(\ell)}$ from $\tilde{\boldsymbol{\theta}}^{\rm RS\text{-}GNN}$
    \STATE Form $\mathbf C_{\rm c}^{(\ell)}$ and $\mathbf C_{\rm p}^{(\ell)}$ using \eqref{eq:C}
    \STATE Obtain $\hat{\mathbf x}_{\rm c}^{(\ell)}$ and $\hat{\mathbf X}_{\rm p}^{(\ell)}$ from the linear systems in \eqref{eq:linearsolve}
    \STATE Reconstruct $\tilde{\mathbf W}^{(\ell+1)}$ using \eqref{eq:reconstruction}
    \STATE Normalize $\tilde{\mathbf W}^{(\ell+1)}$ using \eqref{eq:powernormalization} to obtain $\mathbf W^{(\ell+1)}$
\ENDFOR
\STATE \textbf{Output:} RSMA precoder $\hat{\mathbf {W}}=\Big[\mathbf w_{\rm c}^{(L_{\rm inf})},\mathbf W_{\rm p}^{(L_{\rm inf})}\Big]$
\end{algorithmic}
\end{algorithm}

\section{Simulation Results}
\label{sec:experiments}

We describe the simulation setup, benchmark methods, and experimental results for the proposed RS-GNN.

\subsection{Simulation Setup}\label{subsec:setup}

The BS is equipped with $N=8$ transmit antennas and serves $K=6$ single-antenna users. The noise variance is normalized to $\sigma_k^2=1$ for all $k$, and the transmit power budget is set to $P_{\rm TX}=10^{\mathrm{SNR}/10}$. The default SNR is $15\;$dB, and perfect CSI is assumed at the BS. These default settings are used throughout unless explicitly stated otherwise.

The channels are generated according to a correlated Rician fading model. Specifically, a cluster center angle is sampled uniformly and the user angles are drawn within a prescribed angular spread around the cluster center. The channel vector of user~$k$ is modeled as
\begin{equation}
\mathbf h_k = \sqrt{\frac{\rho}{\rho+1}} \mathbf a(\theta_k)+
\sqrt{\frac{1}{\rho+1}} \mathbf h_{k,\mathrm{NLoS}},
\end{equation} where $\rho=10$ is the Rician factor, $\mathbf a(\theta_k)$ denotes the uniform linear array~(ULA) steering vector with half-wavelength antenna spacing, and $\mathbf h_{k,\mathrm{NLoS}}\sim\mathcal{CN}(\mathbf 0_{N},\mathbf I_N)$. Unless otherwise specified, the angular spread is set to $30^\circ$ to induce non-negligible correlation among user channel directions, which provides a meaningful test environment for evaluating RSMA interference management~\cite{mao2018rate,hua2023learning}.

In the proposed RS-GNN, the common-stream and private-stream processing networks are implemented as single-hidden-layer feed-forward networks, $\phi_{\rm comm}^{\rm RS\text{-}GNN}:
\mathbb{R}^{5}\to\mathbb{R}^{D_{\rm GNN}}
\to\mathbb{R}^{3}$, $\phi_{\rm priv}^{\rm RS\text{-}GNN}:
\mathbb{R}^{5}\to\mathbb{R}^{D_{\rm GNN}}
\to\mathbb{R}^{2}$, with a LeakyReLU activation after the first linear layer. The attention scoring modules are shared linear maps $\psi_{\rm comm}^{\rm RS\text{-}GNN}, \psi_{\rm priv}^{\rm RS\text{-}GNN}: \mathbb{R}^{5}\to\mathbb{R}$, followed by a LeakyReLU activation before softmax normalization. In all experiments, the hidden dimension and the number of recurrent refinement steps are set to \(D_{\rm GNN}=16\) and \(L_{\rm train}=5\), respectively. The model is trained using $3{,}000$ independent channel realizations with mini-batch size $64$. The learnable parameters are optimized by Adam for $300$ epochs with learning rate $10^{-3}$. The cumulative sum-rate loss in~\eqref{eq:loss} uses uniform weights $\omega_\ell=1/L_{\rm train}$ for all $\ell=1,\ldots,L_{\rm train}$. During training, the common-stream bottleneck rate is approximated by a differentiable soft-min function with temperature $\tau=0.1$, while the hard minimum is used at evaluation time.

The benchmark methods are summarized as follows. We compare the proposed RS-GNN with three representative RSMA precoding methods: RSMA-WMMSE, FP-HFPI, and RS-BNN-U. Since several benchmarks exploit the RSMA optimal-structure-based reconstruction, we also briefly discuss their representative algebraic costs, and the resulting performance--online inference-time tradeoff is then evaluated in Fig.~\ref{Fig:5}.
Regarding the structure-based methods, the dominant algebraic cost stems from the reconstruction of the precoders. Specifically, forming the Gram-related matrices and reconstructing the precoder require $O(NK^2)$ operations, while solving the resulting $K\times K$ linear systems requires $O(K^3)$. Hence, FP-HFPI, RS-BNN-U, and RS-GNN share the structure-based reconstruction cost $O(NK^2+K^3)$. 


\begin{itemize}
\item \textbf{RSMA-WMMSE~\cite{mao2018rate}} is a widely used iterative algorithm for RSMA sum-rate maximization that converges to a stationary solution of the non-convex problem $\mathcal P$. Due to its strong empirical performance, RSMA-WMMSE serves as the
primary high-complexity reference benchmark throughout the simulations.
Its complexity mainly stems from repeatedly solving quadratic precoding subproblems via interior-point methods, with a representative complexity of $O\big(I_{\mathrm W}(KN)^{3.5}\big)$, where $I_{\mathrm W}$ denotes the number of WMMSE iterations. RSMA-WMMSE is run with a maximum of $I_{\mathrm W}=30$ iterations and relative convergence tolerance $10^{-3}$.

\item \textbf{FP-HFPI~\cite{fang2024rate}} is a low-complexity numerical algorithm for RSMA precoding based on fractional programming~(FP) and hyperplane fixed-point iteration~(HFPI). At each outer FP iteration, the FP auxiliary variables are updated and an inner HFPI loop searches for the Lagrangian dual variables. Since each HFPI update invokes the structure-based reconstruction, its representative online complexity scales as $O\big(I_{\rm FP}I_{\rm HFPI}(NK^2+K^3)\big)$, where $I_{\rm FP}$ and $I_{\rm HFPI}$ denote the number of outer FP iterations and the average number of inner HFPI updates per outer iteration, respectively. We set \(I_{\rm FP}=30\) outer FP iterations and \(I_{\rm HFPI}=100\) inner HFPI updates per outer iteration, with outer and inner convergence tolerances \(10^{-4}\) and \(10^{-3}\), respectively.

\item \textbf{RS-BNN-U~\cite{wang2024rs}} is an unsupervised variant of the RS-BNN that unfolds FP-HFPI by replacing its iterative dual-variable update with a dense neural network. Unlike the original supervised-pretraining-plus-unsupervised-fine-tuning protocol, RS-BNN-U is trained solely with the RSMA sum-rate loss for a fair label-free comparison with RS-GNN. Each unfolded layer consists of a dense neural update for predicting the structural variables, followed by the same reconstruction. Thus, its representative online complexity is 
$O\big(L_{\rm BNN}(D_{\rm BNN}NK+NK^2+K^3)\big)$, where $L_{\rm BNN}$ and $D_{\rm BNN}$ denote number of unfolded layers and the hidden width of the dense network, respectively. 
We set \(L_{\rm BNN}=5\) and \(D_{\rm BNN}=512\), and RS-BNN-U is trained with the same training data, optimizer, mini-batch size, learning rate, and number of epochs as RS-GNN.


\item \textbf{RS-GNN} (proposed) predicts the structural variables through precoder-dependent graph message passing over the fully connected user graph, with the representative complexity of $O\big(L_{\rm inf}(D_{\rm GNN}K^2+NK^2+K^3)\big)$. Unlike RS-BNN-U, varying \(L_{\rm inf}\) changes only the number of recurrent executions of the same trained modules.

\end{itemize}

All simulations were conducted on Python~3.11.15. Learning-based methods were implemented in PyTorch 2.12.0 with CUDA 12.8 and evaluated on an NVIDIA GeForce RTX 5080 GPU. For a fair comparison of online computational latency, all learning-based and
iterative numerical methods were executed on an AMD Ryzen 9 9900X CPU with 12 cores using a single channel realization at a time, excluding offline training time for GPU-based methods.


\begin{figure}[t]
  \centering
  \includegraphics[width=1.00\linewidth]{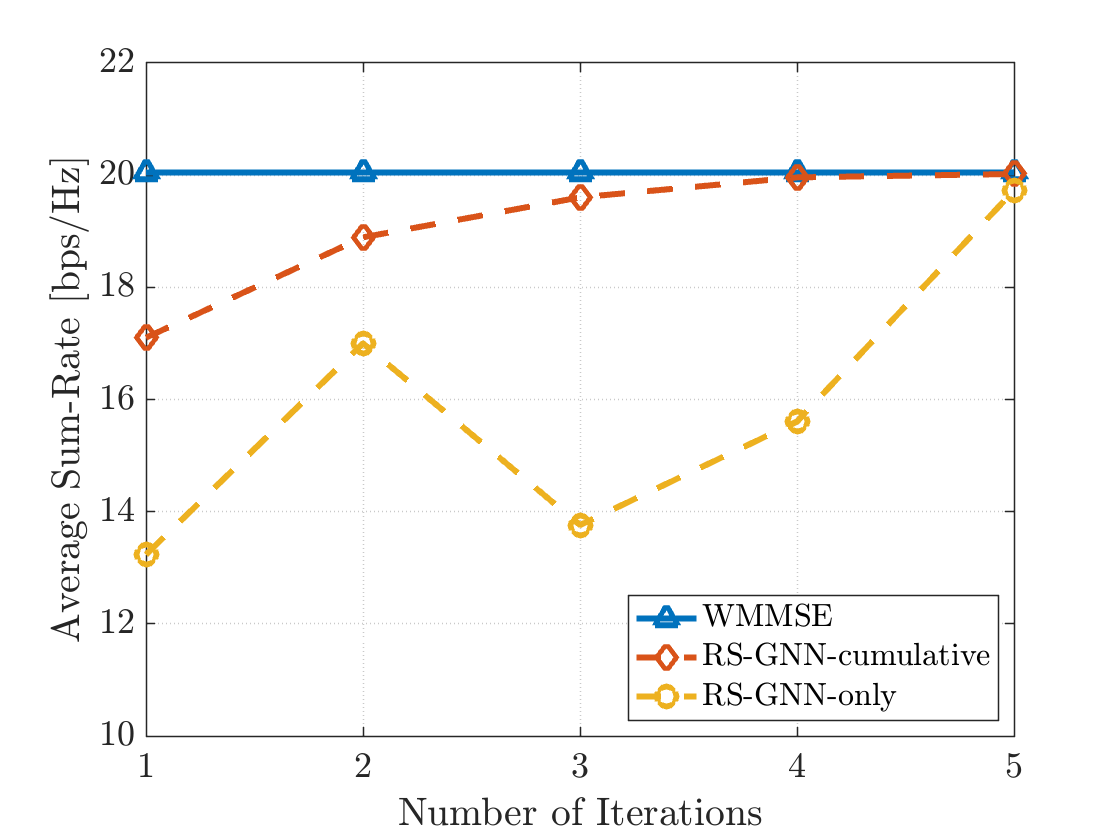}
  \caption{Effect of cumulative training on variable-depth inference with $L_{\rm train}=5$.}
  \label{Fig:3}
\end{figure}

\begin{figure*}[t]
    \centering
    \begin{subfigure}[t]{0.32\textwidth}
        \centering
        \includegraphics[width=\linewidth]{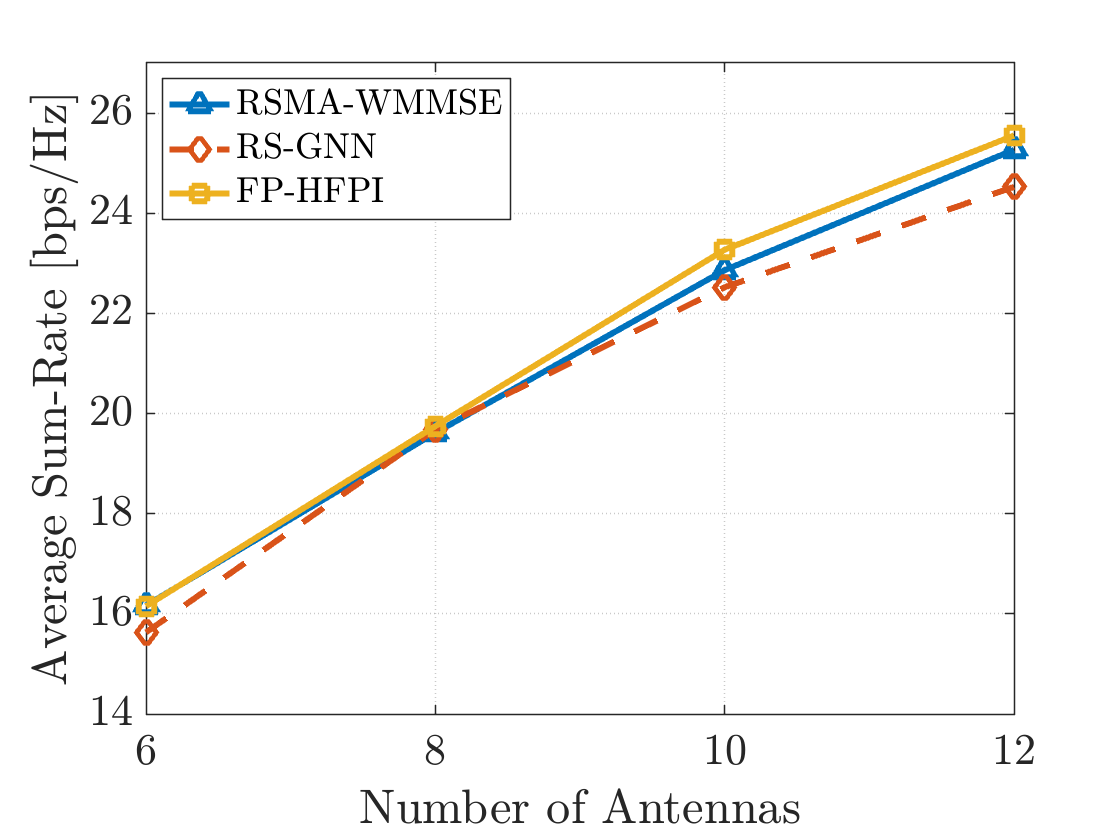}
        \caption{Antenna generalization.}
        \label{fig:4(a)}
    \end{subfigure}
    \hfill
    \begin{subfigure}[t]{0.32\textwidth}
        \centering
        \includegraphics[width=\linewidth]{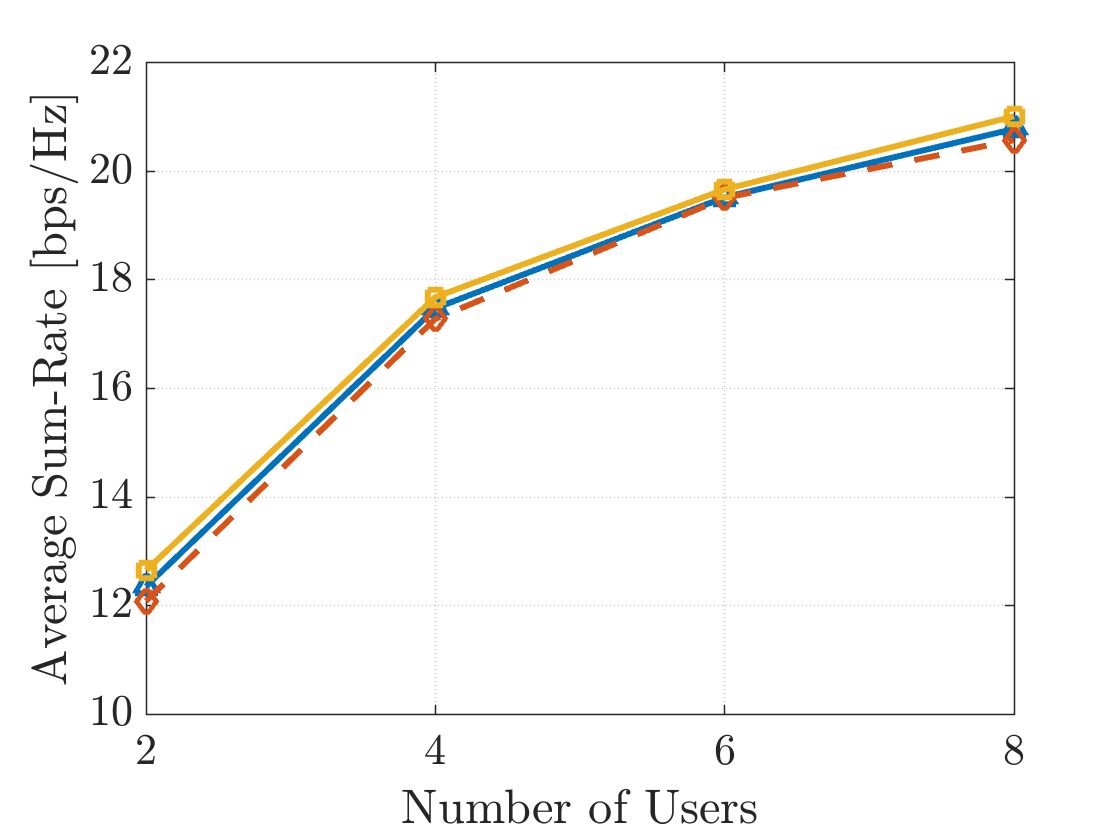}
        \caption{User generalization.}
        \label{fig:4(b)}
        
    \end{subfigure}
    \hfill
    \begin{subfigure}[t]{0.32\textwidth}
        \centering
        \includegraphics[width=\linewidth]{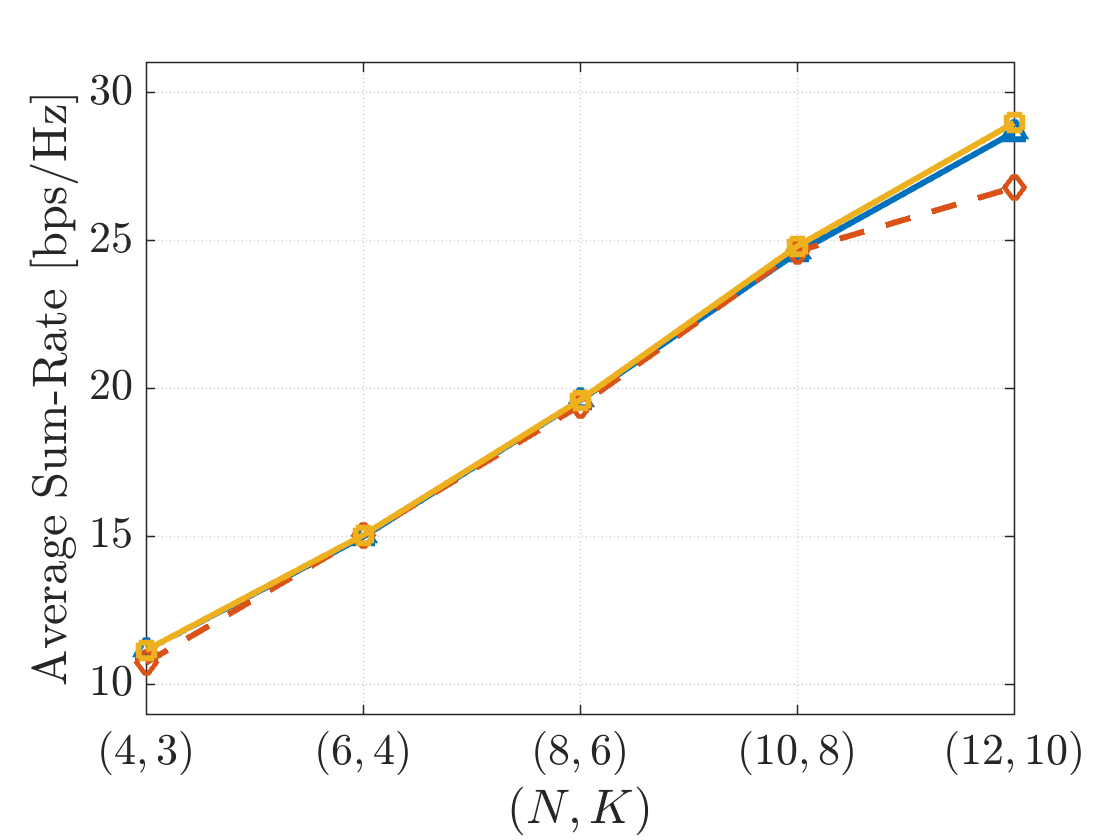}
        \caption{Joint $(N,K)$ generalization.}
        \label{fig:4(c)}
    \end{subfigure}
    \caption{Size generalization of RS-GNN. A single RS-GNN model trained at $(N,K)=(8,6)$ is evaluated without retraining under (a) antenna-size sweep with fixed $K=6$, (b) user-size sweep with fixed $N=8$, and (c) joint $(N,K)$ sweep.}
    \label{fig:4}
\end{figure*}

\begin{figure}[t]
  \centering
  \includegraphics[width=1.00\linewidth]{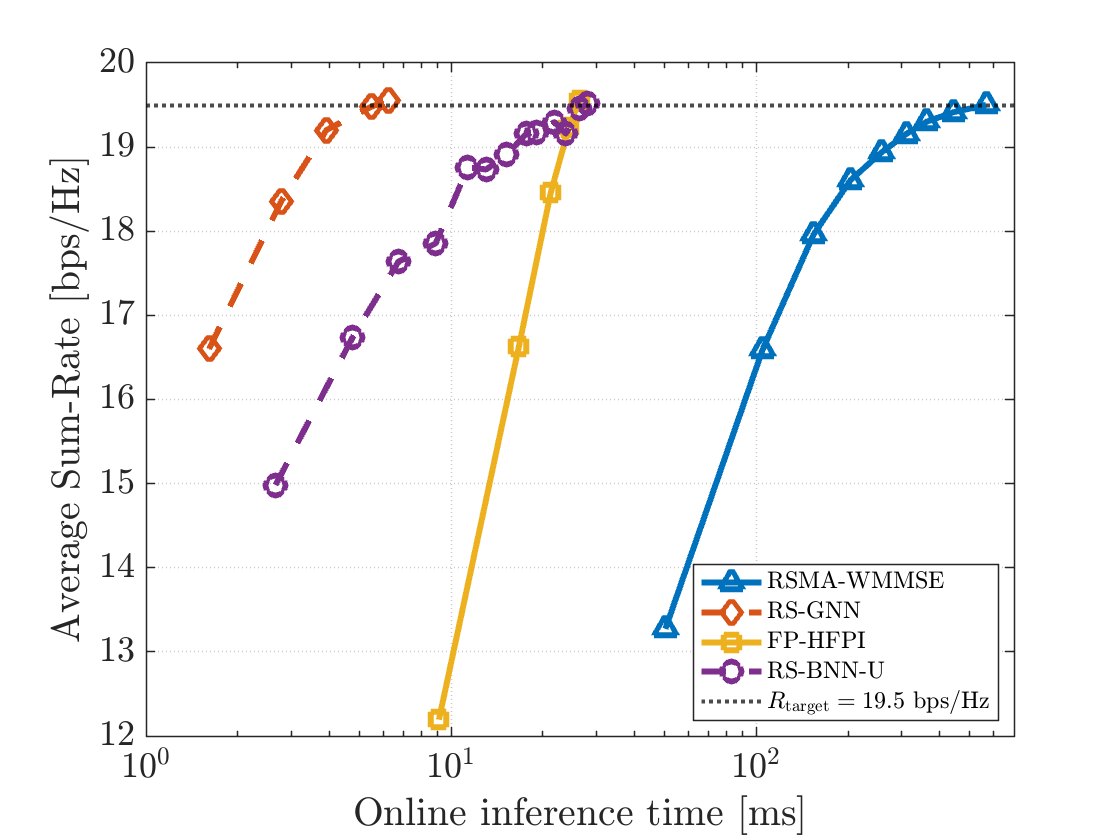}
  \caption{Performance--online-inference-time tradeoff of RSMA precoding methods.}
  \label{Fig:5}
\end{figure}

\begin{figure}[t]
  \centering
  \includegraphics[width=1.00\linewidth]{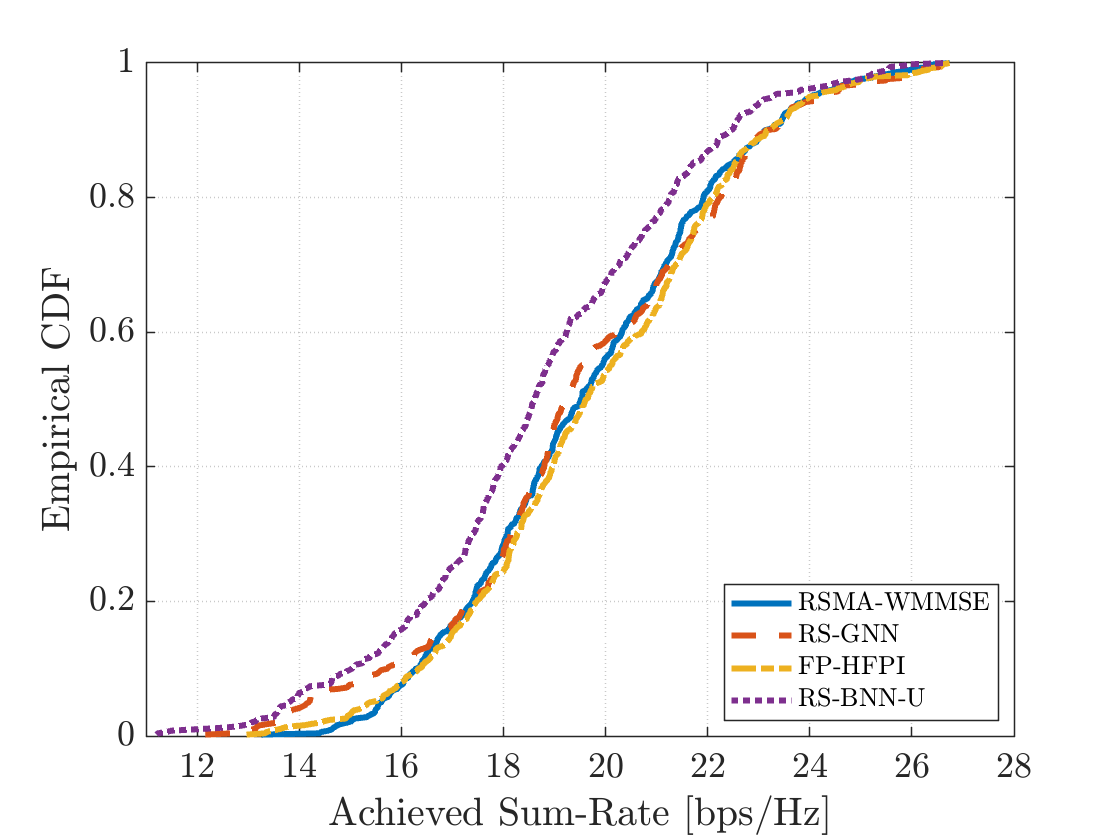}
  \caption{Empirical CDF of the achieved sum-rate over independent test channel realizations.}
  \label{Fig:6}
\end{figure}

\begin{figure*}[!t]
    \centering
    \begin{subfigure}[t]{0.50\textwidth}
        \centering
        \includegraphics[width=\linewidth]{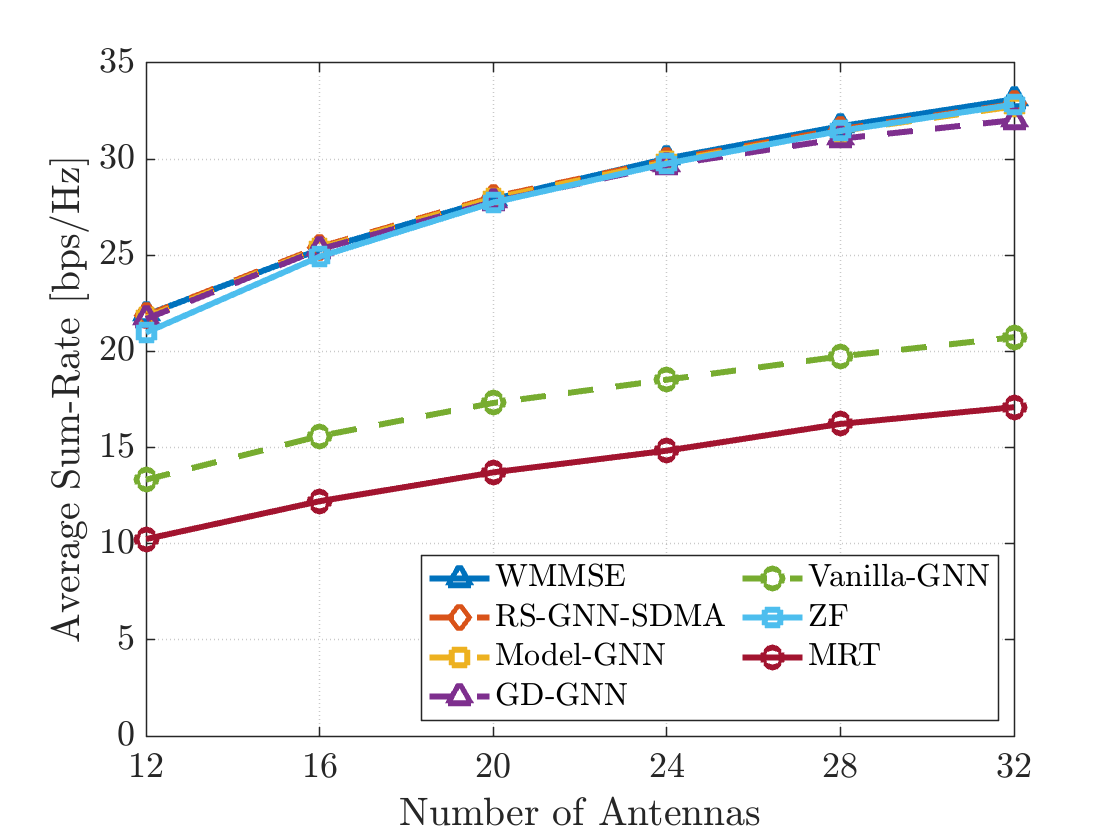}
        \caption{Antenna generalization}
        \label{Fig:7(a)}
    \end{subfigure}%
    \hfill
    \begin{subfigure}[t]{0.50\textwidth}
        \centering
        \includegraphics[width=\linewidth]{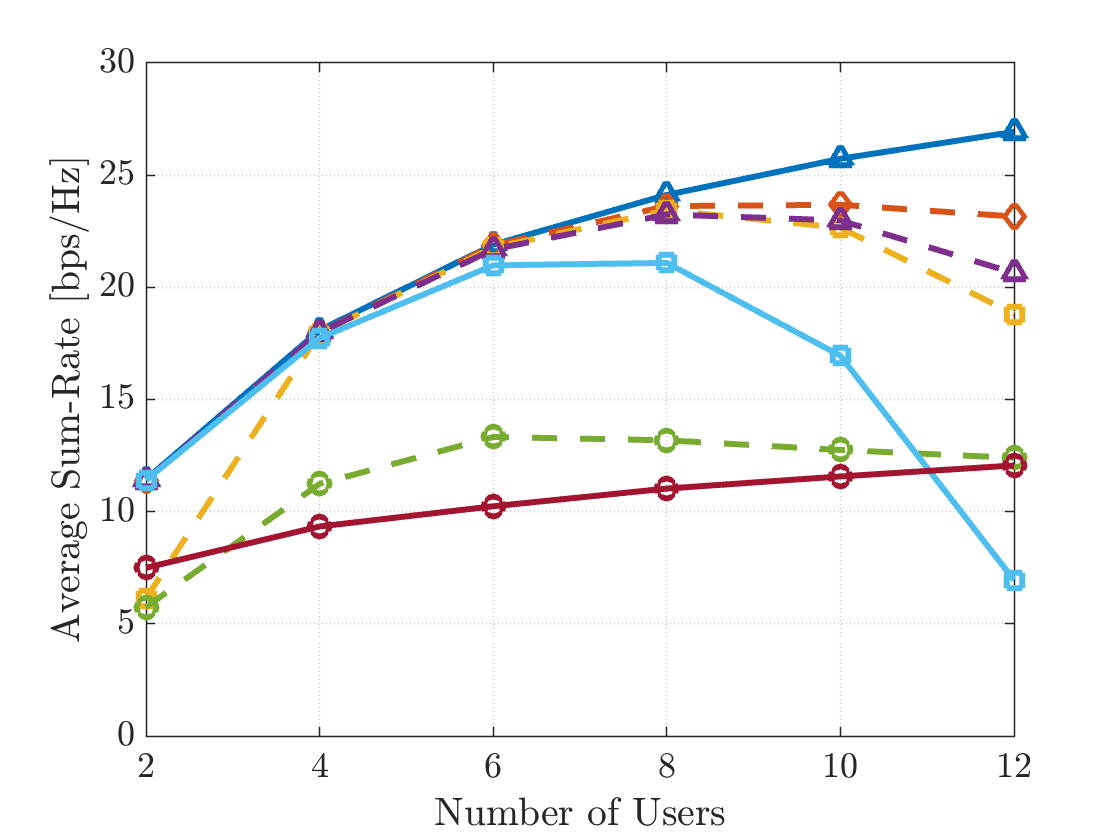}
        \caption{User generalization}
        \label{Fig:7(b)}
    \end{subfigure}%
    \hfill
    \begin{subfigure}[t]{0.50\textwidth}
        \centering
        \includegraphics[width=\linewidth]{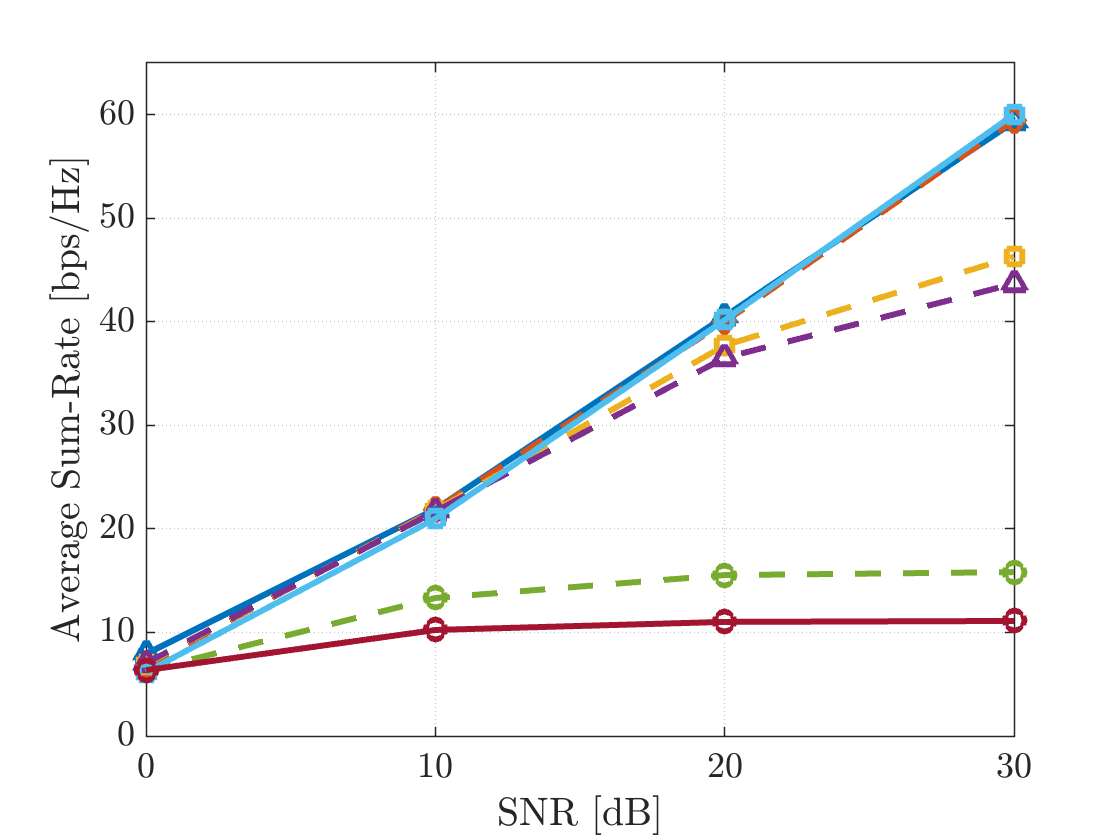}
        \caption{SNR generalization}
        \label{Fig:7(c)}
    \end{subfigure}%
    \hfill
    \begin{subfigure}[t]{0.50\textwidth}
        \centering
        \includegraphics[width=\linewidth]{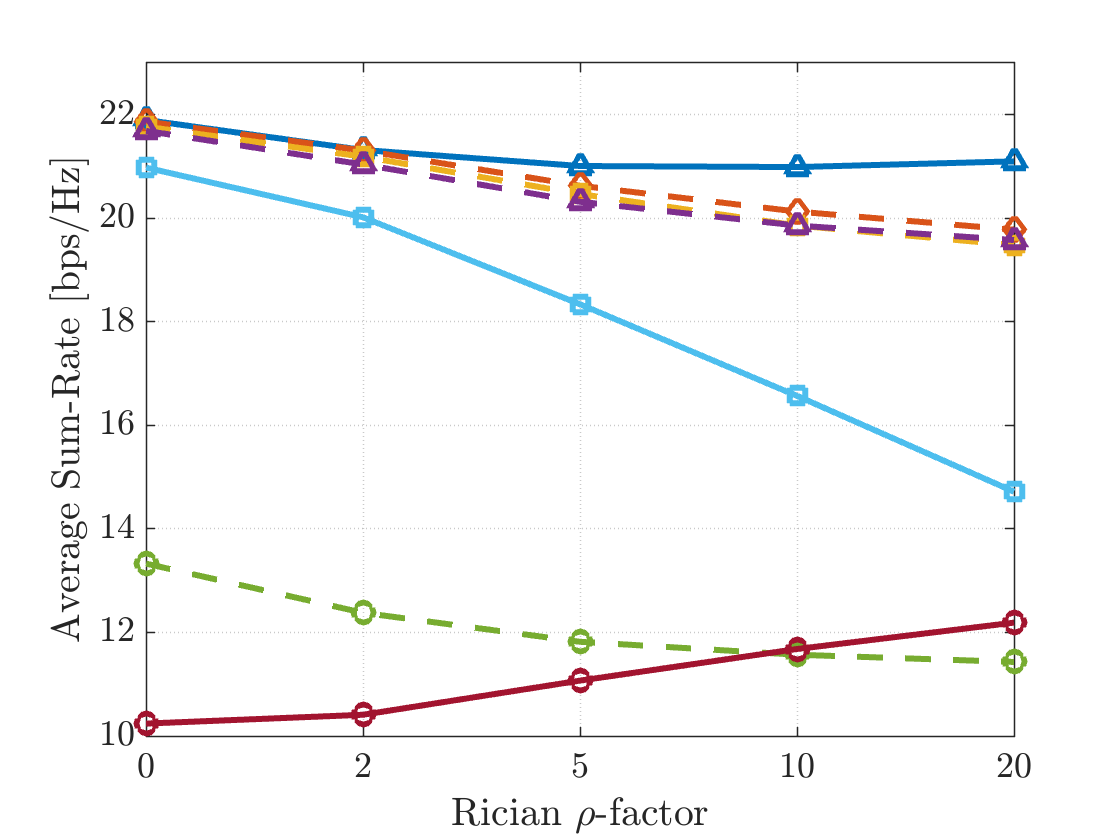}
        \caption{Channel-model mismatch}
        \label{Fig:7(d)}
    \end{subfigure}
    \caption{SDMA special case validation of the proposed RS-GNN against representative GNN-based precoding baselines. The subfigures show generalization over (a) antennas, (b) users, (c) SNRs, and (d) Rician $\rho$-factors.}
    \label{Fig:7}
\end{figure*}

\subsection{Ablation Study}

We conduct ablation experiments to validate the cumulative training strategy and to demonstrate the size generalizability of RS-GNN enabled by structure-preserving reconstruction. 

Fig.~\ref{Fig:3} examines the effect of cumulative recurrent training on variable-depth inference. RS-GNN is trained with $L_{\rm train}=5$ refinement steps and evaluated at inference depths $L_{\rm inf}\leq L_{\rm train}$, comparing two training strategies: {\em final-only training} (labeled RS-GNN-only in Fig.~\ref{Fig:3}), where the loss is applied only to the last recurrent output, and {\em cumulative training} (labeled RS-GNN-cumulative), where all intermediate outputs are explicitly supervised through the RSMA sum-rate objective in~\eqref{eq:loss}. RSMA-WMMSE is run until convergence and shown as a horizontal reference line, since its performance does not depend on the inference depth $L_{\rm inf}$. With final-only training, intermediate outputs are not directly optimized and exhibit unstable performance at inference depths shorter than $L_{\rm train}$. Cumulative training, by contrast, produces a stable and monotonically improving refinement trajectory as \(L_{\rm inf}\) increases, confirming that the same trained model can be evaluated at any \(L_{\rm inf}\leq L_{\rm train}\) without retraining. Consequently, \(L_{\rm inf}\) serves as a controllable knob for the online computational budget: a smaller \(L_{\rm inf}\) reduces inference time, whereas a larger \(L_{\rm inf}\) yields a more refined precoding solution.

Fig.~\ref{fig:4} evaluates the size-generalization capability of RS-GNN. The model is trained only at the base configuration $(N,K)=(8,6)$ and directly evaluated on unseen antenna and user configurations without retraining or architectural modification, whereas conventional learning-based precoders typically require retraining when the system dimensions change. RS-BNN-U is excluded from this size-generalization evaluation because its dense unfolded network is tied to the training system dimensions and would require separate retraining for each $(N,K)$ configuration. In contrast, RSMA-WMMSE and FP-HFPI are iterative numerical algorithms that are inherently size-agnostic, requiring no training at all, and therefore serve as consistent performance references across all tested configurations. RS-GNN closely tracks both references across the tested range of system sizes. In the antenna-size sweep, the proposed method maintains a sum-rate close to that of RSMA-WMMSE as $N$ increases, confirming that the architecture is not tied to the training antenna dimension. In the user-size sweep, RS-GNN remains competitive performance for both smaller and larger numbers of users. As $K$ increases with fixed $N$, the common-rate bottleneck and multi-user interference coupling become more severe, since more users must share the same spatial resources. Nevertheless, RS-GNN still achieves a substantial fraction of the RSMA-WMMSE performance without retraining. The joint \((N,K)\) sweep further confirms this behavior when both dimensions vary simultaneously. Although this setting is more demanding, RS-GNN preserves the overall scaling behavior of RSMA-WMMSE across all tested configurations, demonstrating that the learned refinement rule generalizes beyond the training configuration under simultaneous antenna and user size shifts.

\subsection{Comparison with RSMA Baselines}
\label{subsec:RSMA}

Fig.~\ref{Fig:5} compares the performance--online-inference-time tradeoff of the considered RSMA precoding methods, where online inference time denotes the runtime for generating a precoder per channel realization. Since the per-step algebraic costs differ across methods, comparing them at the same iteration or layer index can be misleading. We therefore vary the computational budget of each method and plot the achieved average sum-rate against the measured online inference time; operating points closer to the upper-left corner indicate a more favorable tradeoff. RSMA-WMMSE achieves strong sum-rate performance but incurs the largest online inference time, as each iteration requires solving a convex precoding subproblem. FP-HFPI reduces this cost by exploiting the optimal RSMA precoding structure and replacing the high-dimensional precoder search with a low-dimensional structural-variable search, though it still requires an online fixed-point procedure for the structural variables at every channel realization. RS-BNN-U further reduces the online cost by replacing the iterative dual-variable update of FP-HFPI with a dense unfolded neural update, while retaining the same structure-based reconstruction. However, since RS-BNN-U relies on a fixed-dimensional dense neural update inherited from the FP-HFPI-inspired unfolded structure, its tradeoff remains less favorable than that of RS-GNN. In contrast, RS-GNN predicts the structural variables through precoder-dependent graph message passing, avoiding any per-channel fixed-point search while retaining the structure-based reconstruction. 
{At the target average sum-rate of $19.5$ bps/Hz, RS-GNN reduces the online inference time by approximately $77.7\%$, $76.3\%$, and $98.9\%$ compared with RS-BNN-U, FP-HFPI, and RSMA-WMMSE, respectively.} As a result, RS-GNN achieves the most favorable performance--online-inference-time tradeoff among the considered methods.

Fig.~\ref{Fig:6} shows the empirical cumulative distribution function~(CDF) of the achieved sum-rate over independent test channel realizations. RS-GNN closely tracks the distributions of RSMA-WMMSE and FP-HFPI across most percentiles, demonstrating that the proposed method maintains near-optimal performance across the entire channel distribution, not merely on average. Compared with RS-BNN-U, the RS-GNN CDF shifts toward the higher-rate region in the low-to-middle percentile range, indicating improved robustness against unfavorable channel realizations. This result confirms that the recurrent graph-based refinement and precoder-dependent graph feature reconstruction yield more reliable RSMA precoding than the dense unfolded baseline.

\subsection{SDMA Special Case: Comparison with GNN-Based Precoders}\label{subsec:SDMA}

As established in Section~\ref{subsec:PEA}, existing GNN-based SDMA precoders cannot be directly extended to RSMA, since they lack a mechanism for generating the user-invariant common precoder required by the mixed permutation structure of RSMA. RS-GNN, in contrast, is designed for RSMA and reduces to a valid SDMA precoder simply by deactivating the common-stream branch, without any architectural modification. This asymmetry motivates the present comparison: we evaluate the SDMA special case of RS-GNN, referred to as {\bf RS-GNN-SDMA}, against representative GNN-based SDMA precoders to examine whether the proposed structure-preserving refinement remains effective even under the conventional private-stream-only formulation for which these baselines were originally designed.

The compared methods are as follows. \textbf{Vanilla-GNN~\cite{shen2022graph}} learns precoding through standard message passing without domain-specific structural information. \textbf{Model-GNN~\cite{guo2023model}} augments GNN features with structural information derived from the wireless precoding model. \textbf{GD-GNN~\cite{zhang2025gradient}} designs its message passing mechanism based on the gradient structure of the precoding objective. \textbf{ZF} and \textbf{MRT} are included as conventional linear precoding baselines: ZF suppresses multi-user interference via channel pseudo-inversion, while MRT aligns each beam with the corresponding user channel. \textbf{WMMSE}~\cite{shi2011iteratively} serves as a near-optimal high-complexity reference benchmark for SDMA sum-rate maximization. For a fair comparison with the SDMA benchmark methods, Fig.~\ref{Fig:7} follows the Rayleigh fading setup commonly adopted in the corresponding GNN-based SDMA precoding studies, rather than the correlated Rician fading model.

Fig.~\ref{Fig:7} presents the validation results under four evaluation scenarios. In Fig.~\ref{Fig:7}(a), the number of antennas is varied while the model is evaluated without retraining. RS-GNN-SDMA closely tracks WMMSE performance across the entire antenna range, confirming that the structure-based reconstruction is not tied to the training antenna dimension. In Fig.~\ref{Fig:7}(b), the number of users is varied, and RS-GNN-SDMA maintains competitive performance for both smaller and larger user configurations. Although the performance gap from WMMSE widens under high user load, RS-GNN-SDMA consistently outperforms conventional GNN baselines, confirming that predicting low-dimensional structural variables is more robust than directly generating high-dimensional precoding vectors. Fig.~\ref{Fig:7}(c) evaluates SNR generalization. At high SNR, where interference management becomes increasingly critical, RS-GNN-SDMA preserves the scaling behavior of WMMSE, whereas several GNN baselines saturate or exhibit noticeable performance degradation. 
Fig.~\ref{Fig:7}(d) investigates robustness to channel distribution mismatch by varying the Rician $\rho$-factor. As $\rho$ increases and the channel statistics deviate from the Rayleigh training distribution, RS-GNN-SDMA remains close to WMMSE and consistently outperforms the GNN-based baselines, indicating that the proposed architecture does not overfit to the training channel distribution. Collectively, these results demonstrate that the precoder-dependent graph feature reconstruction and structure-based recurrent refinement are beneficial not only for RSMA precoding but also for its SDMA special case.

\section{Conclusion}
\label{sec:conclusion}
In this paper, we proposed RS-GNN, a recurrent structure-preserving graph neural network for size-scalable MU-MISO RSMA precoding. To the best of our knowledge, RS-GNN is the first GNN-based RSMA precoder to jointly generalize across arbitrary unseen numbers of antennas and users with a single trained model.  By embedding  the analytical beamforming structure into the recurrent refinement process, RS-GNN enables interference-aware precoder updates over a structured and compact solution space, with provable mixed permutation equivariance and size generalizability. Moreover, the same trained model supports variable-depth inference, allowing the performance--online-inference-time tradeoff to be directly controlled at deployment time without retraining. Unlike existing GNN-based SDMA precoders, which cannot be directly extended to RSMA due to the lack of a mechanism for generating the user-invariant common precoder, RS-GNN reduces to a conventional SDMA precoder simply by deactivating the common-stream branch, without any architectural modification.
Simulation results demonstrated near-WMMSE sum-rate performance with substantially lower inference time, robust generalization across varying system configurations, SNR conditions, and channel distributions, and consistently superior performance over existing GNN-based baselines under the SDMA special case.

\appendices
\section{Proof of Theorem 1}
\label{sec:apxA}
 We proceed by induction on the $L$ refinement layers.

\noindent\textit{\bf Base case} $(\ell=0)$: The initial precoder \(\mathbf W^{(0)}\) is given by~\eqref{eq:init}, where the private precoders follow matched-filter transmission and the common precoder is initialized along the aggregate channel direction  $\sum_{k\in\mathcal{K}}\mathbf{h}_k=\mathbf{H}\mathbf{1}_K$.
Since $\boldsymbol{\Pi}_{\rm U}\mathbf{1}_K=\mathbf{1}_K$ for any permutation matrix, we have
\begin{equation}
\bar{\mathbf{H}}\mathbf{1}_K
=\boldsymbol{\Pi}_{\rm A}\mathbf{H}\boldsymbol{\Pi}_{\rm U}\mathbf{1}_K
=\boldsymbol{\Pi}_{\rm A}\mathbf{H}\mathbf{1}_K,
\end{equation}
so that the aggregate channel direction is permuted by $\boldsymbol{\Pi}_{\rm A}$ only, independent of $\boldsymbol{\Pi}_{\rm U}$. Together with $\bar{\mathbf{H}}=\boldsymbol{\Pi}_{\rm A}\mathbf{H}\boldsymbol{\Pi}_{\rm U}$ for the private precoders, the permuted initializations therefore satisfy
\begin{align}
  \bar{\mathbf{w}}_{\rm c}^{(0)}
  &=\boldsymbol{\Pi}_{\rm A}\mathbf{w}_{\rm c}^{(0)},\nonumber\\
  \bar{\mathbf{W}}_{\rm p}^{(0)}
  &=\boldsymbol{\Pi}_{\rm A}\mathbf{W}_{\rm p}^{(0)}\boldsymbol{\Pi}_{\rm U},
\end{align} establishing the base case.

\noindent\textit{\bf Induction step}:
Assume that the permutation relation holds at a layer $\ell$, i.e.,
\begin{align}
  \bar{\mathbf{w}}_{\rm c}^{(\ell)}&=\boldsymbol{\Pi}_{\rm A}\mathbf{w}_{\rm c}^{(\ell)},\nonumber\\
  \bar{\mathbf{W}}_{\rm p}^{(\ell)}
  &=\boldsymbol{\Pi}_{\rm A}\mathbf{W}_{\rm p}^{(\ell)}\boldsymbol{\Pi}_{\rm U}.
  \label{eq:pe_induction_assumption}
\end{align}
We show that the same relation holds at the layer $\ell+1$ by tracing each step of the recurrent refinement.

\textit{Step 1. (Graph Feature Construction)}: Under~\eqref{eq:pe_induction_assumption}, the components in~\eqref{eq:rsma_edgefeature} transform as
\begin{align}
  \bar{\mathbf{H}}^{\rm H}\bar{\mathbf{H}}
  &=\boldsymbol{\Pi}_{\rm U}^{\rm T}
  \mathbf{H}^{\rm H}\mathbf{H}\boldsymbol{\Pi}_{\rm U},
  \label{eq:pe_gram}\\
  \bar{\mathbf{H}}^{\rm H}\bar{\mathbf{w}}_{\rm c}^{(\ell)}
  &=\boldsymbol{\Pi}_{\rm U}^{\rm T}
  \mathbf{H}^{\rm H}\mathbf{w}_{\rm c}^{(\ell)},
  \label{eq:pe_common}\\
  \bar{\mathbf{H}}^{\rm H}\bar{\mathbf{W}}_{\rm p}^{(\ell)}
  &=\boldsymbol{\Pi}_{\rm U}^{\rm T}
  \mathbf{H}^{\rm H}\mathbf{W}_{\rm p}^{(\ell)}\boldsymbol{\Pi}_{\rm U}.
  \label{eq:pe_private}
\end{align}
Hence, the graph feature tensor is re-indexed by $\boldsymbol{\Pi}_{\rm U}$ and is invariant to the antenna permutation~$\boldsymbol{\Pi}_{\rm A}$. The same relations hold for the self-loop feature $\mathbf{g}_{k\to k}^{(\ell)}$, obtained by setting $v=k$ in~\eqref{eq:rsma_edgefeature}, so that the self-loop feature is permuted consistently with the target node index.

\textit{Step 2. (Structural Variable Prediction)}: Since the processing networks are shared across all edges and the attention pooling is permutation-invariant over incoming messages, the predicted structural variables are permuted consistently with the user ordering:
\begin{align}
  \bar{\boldsymbol{\lambda}}_{\rm c}^{(\ell)} &=\boldsymbol{\Pi}_{\rm U}^{\rm T}\boldsymbol{\lambda}_{\rm c}^{(\ell)},\quad \bar{\boldsymbol{\lambda}}_{\rm p}^{(\ell)}
  =\boldsymbol{\Pi}_{\rm U}^{\rm T}\boldsymbol{\lambda}_{\rm p}^{(\ell)},\nonumber\\
  \bar{\mathbf{d}}_{\rm c}^{(\ell)}
  &=\boldsymbol{\Pi}_{\rm U}^{\rm T}\mathbf{d}_{\rm c}^{(\ell)},\quad
  \bar{\mathbf{d}}_{\rm p}^{(\ell)}=\boldsymbol{\Pi}_{\rm U}^{\rm T}\mathbf{d}_{\rm p}^{(\ell)}.
  \label{eq:pe_d}
\end{align}
Although $\mathbf{d}_{\rm c}^{(\ell)}$ is associated with the common-stream, it is a coefficient vector in the user-channel basis and is therefore permuted consistently with the columns of~$\mathbf{H}$; the reconstructed common precoder itself will be shown to remain invariant to user reordering. The diagonal structural matrices consequently satisfy
\begin{align}
\operatorname{diag}\Big(\bar{\boldsymbol{\lambda}}_{\rm c}^{(\ell)}\Big)
  &=\boldsymbol{\Pi}_{\rm U}^{\rm T}
  \operatorname{diag}\Big(\boldsymbol{\lambda}_{\rm c}^{(\ell)}\Big)\boldsymbol{\Pi}_{\rm U},\nonumber\\
\operatorname{diag}\Big(\bar{\boldsymbol{\lambda}}_{\rm p}^{(\ell)}\Big)
  &=\boldsymbol{\Pi}_{\rm U}^{\rm T}
  \operatorname{diag}\Big(\boldsymbol{\lambda}_{\rm p}^{(\ell)}\Big)\boldsymbol{\Pi}_{\rm U}.
  \label{eq:pe_lambda_diag}
\end{align}

\textit{Step 3. (Precoder reconstruction)}: Substituting~\eqref{eq:pe_gram} and \eqref{eq:pe_lambda_diag} into~\eqref{eq:C}, the coefficient matrices for the permuted channel satisfy
\begin{align}
  \bar{\mathbf{C}}_{\rm c}^{(\ell)}
  &=\boldsymbol{\Pi}_{\rm U}^{\rm T}
  \mathbf{C}_{\rm c}^{(\ell)}\boldsymbol{\Pi}_{\rm U},\nonumber\\
  \bar{\mathbf{C}}_{\rm p}^{(\ell)}
  &=\boldsymbol{\Pi}_{\rm U}^{\rm T}
  \mathbf{C}_{\rm p}^{(\ell)}\boldsymbol{\Pi}_{\rm U}.
  \label{eq:pe_Cbar}
\end{align}
Applying these relations together with~\eqref{eq:pe_d} to the linear systems~\eqref{eq:linearsolve}, the solutions for the permuted input satisfy
\begin{align}
  \bar{\mathbf{x}}_{\rm c}^{(\ell)}
  &=\boldsymbol{\Pi}_{\rm U}^{\rm T}\mathbf{x}_{\rm c}^{(\ell)},\nonumber\\
  \bar{\mathbf{X}}_{\rm p}^{(\ell)}
  &=\boldsymbol{\Pi}_{\rm U}^{\rm T}
  \mathbf{X}_{\rm p}^{(\ell)}\boldsymbol{\Pi}_{\rm U}.
  \label{eq:pe_solution}
\end{align} Substituting into the reconstruction step~\eqref{eq:reconstruction} and using~$\bar{\mathbf{H}}=\boldsymbol{\Pi}_{\rm A} \mathbf{H}\boldsymbol{\Pi}_{\rm U}$, the unnormalized precoders transform as
\begin{align}
  \tilde{\bar{\mathbf{w}}}_{\rm c}^{(\ell+1)}
  &=\boldsymbol{\Pi}_{\rm A}\tilde{\mathbf{w}}_{\rm c}^{(\ell+1)},\nonumber\\
  \tilde{\bar{\mathbf{W}}}_{\rm p}^{(\ell+1)}
  &=\boldsymbol{\Pi}_{\rm A}
  \tilde{\mathbf{W}}_{\rm p}^{(\ell+1)}\boldsymbol{\Pi}_{\rm U}.
  \label{eq:pe_recon_compact}
\end{align}

\textit{Step 4. (Power normalization)}: Since it holds that $\big\|\tilde{\mathbf W}\big\|_{F}=\big\| \big[
\boldsymbol{\Pi}_{\rm A}\tilde{\mathbf w}_{\rm c},
\;
\boldsymbol{\Pi}_{\rm A}\tilde{\mathbf W}_{\rm p}
\boldsymbol{\Pi}_{\rm U}
\big]\big\|_{F}$ for any permutation matrices, the normalization factor in~\eqref{eq:powernormalization} is identical for $\tilde{\bar{\mathbf{W}}}^{(\ell+1)}$. The normalized precoders therefore satisfy
\begin{align}
  \bar{\mathbf{w}}_{\rm c}^{(\ell+1)}
  &=\boldsymbol{\Pi}_{\rm A}\mathbf{w}_{\rm c}^{(\ell+1)},\nonumber\\
  \bar{\mathbf{W}}_{\rm p}^{(\ell+1)}
  &=\boldsymbol{\Pi}_{\rm A}
  \mathbf{W}_{\rm p}^{(\ell+1)}\boldsymbol{\Pi}_{\rm U}.
\end{align} By induction, the proof is completed.



\end{document}

%% file: macros.tex
\long\def\comment#1{}

\newfont{\bbb}{msbm10 scaled 700}

\newfont{\bb}{msbm10 scaled 1100}
\newcommand{\CC}{\mbox{\bb C}}

\newcommand{\EE}{\mbox{\bb E}}

\newcommand{\hv}{{\bf h}}

\newcommand{\sv}{{\bf s}}

\newcommand{\wv}{{\bf w}}

\newcommand{\xv}{{\bf x}}

\newcommand{\Hm}{{\bf H}}
\newcommand{\Id}{{\bf I}}

\newcommand{\Wm}{{\bf W}}

\newcommand{\Ec}{{\cal E}}

\newcommand{\Kc}{{\cal K}}

\newcommand{\Oc}{{\cal O}}
\newcommand{\Pc}{{\cal P}}

\newcommand{\lambdav}{\hbox{\boldmath$\lambda$}}

\newcommand{\thetav}{\hbox{\boldmath$\theta$}}

\renewcommand{\Re}{{\rm Re}}
\renewcommand{\Im}{{\rm Im}}
